\numberwithin{equation}{section}
\newtheorem{Theorem}{Theorem}[section]
\newtheorem{Corollary}[Theorem]{Corollary}
\newtheorem{Lemma}[Theorem]{Lemma}
\newtheorem{Proposition}[Theorem]{Proposition}
\newtheorem{Conjecture}[Theorem]{Conjecture}
 { \theoremstyle{definition}
\newtheorem{Definition}[Theorem]{Definition}
\newtheorem{Remark}[Theorem]{Remark}
\newtheorem{Fact}[Theorem]{Fact}}
\newcommand{\setatop}[2]{\genfrac{}{}{0pt}{}{#1}{#2}}
\newcommand{\vlam}{\boldsymbol{\lambda}}
\renewcommand{\Re}{\mathrm{Re}}
\renewcommand{\Im}{\mathrm{Im}}
\newcommand{\bx}{\bar{x}}
\newcommand{\bs}{\bar{s}}
\newcommand{\ee}{{\rm e}}
\newcommand{\ii}{{\rm i}}
\newcommand{\R}{{\mathbb R}}
\newcommand{\C}{{\mathbb C}}
\newcommand{\Z}{{\mathbb Z}}
\newcommand{\cT}{{\mathcal T}}
\begin{document}

\allowdisplaybreaks

\newcommand{\arXivNumber}{2006.07171}

\renewcommand{\thefootnote}{}

\renewcommand{\PaperNumber}{105}

\FirstPageHeading

\ShortArticleName{Basic Properties of~Non-Stationary Ruijsenaars Functions}

\ArticleName{Basic Properties of~Non-Stationary Ruijsenaars\\ Functions\footnote{This paper is a~contribution to the Special Issue on Elliptic Integrable Systems, Special Functions and Quantum Field Theory. The full collection is available at \href{https://www.emis.de/journals/SIGMA/elliptic-integrable-systems.html}{https://www.emis.de/journals/SIGMA/elliptic-integrable-systems.html}}}

\Author{Edwin LANGMANN~$^\dag$, Masatoshi NOUMI~$^\ddag$ and Junichi SHIRAISHI~$^\S$}

\AuthorNameForHeading{E.~Langmann, M.~Noumi and J.~Shiraishi}

\Address{$^\dag$~Physics Department, KTH Royal Institute of~Technology, SE-106 91 Stockholm, Sweden}
\EmailD{\href{mailto:langmann@kth.se}{langmann@kth.se}}

\Address{$^\ddag$~Department of~Mathematics, KTH Royal Institute of~Technology, SE-100 44 Stockholm, Sweden\\
\hphantom{$^\ddag$}~(on leave from: Department of~Mathematics, Kobe University, Rokko, Kobe 657-8501, Japan)}
\EmailD{\href{mailto:noumi@kth.se}{noumi@kth.se}}

\Address{$^\S$~Graduate School of~Mathematical Sciences, The University of~Tokyo, Komaba,\\
\hphantom{$^\S$}~Tokyo 153-8914, Japan}
\EmailD{\href{mailto:shiraish@ms.u-tokyo.ac.jp}{shiraish@ms.u-tokyo.ac.jp}}

\ArticleDates{Received June 15, 2020, in final form October 08, 2020; Published online October 21, 2020}

\Abstract{For any variable number, a~non-stationary Ruijsenaars function was recently introduced as a~natural generalization of~an explicitly known asymptotically free solution of~the trigonometric Ruijsenaars model, and it~was conjectured that this non-stationary Ruijsenaars function provides an explicit solution of~the elliptic Ruijsenaars model. We~present alternative series representations of~the non-stationary Ruijsenaars functions, and we prove that these series converge. We~also introduce novel difference operators called~$\cT$ which, as we prove in the trigonometric limit and conjecture in the general case, act diagonally on the non-stationary Ruijsenaars functions.}

\Keywords{elliptic integrable systems; elliptic hypergeometric functions; Ruijsenaars systems}

\Classification{81Q80; 32A17; 33E20; 33E30}

\renewcommand{\thefootnote}{\arabic{footnote}}
\setcounter{footnote}{0}

\section{Introduction}
The celebrated quantum Calogero--Moser--Sutherland systems \cite{OP83} have natural relativistic gene\-ralizations discovered by Ruijsenaars \cite{R87}. The Ruijsenaars systems come in four kinds: ratio\-nal, trigonometric, hyperbolic, and elliptic, with the latter case being the most general and reducing to the others in certain limits~\cite{R87}. While the explicit solution of~the trigonometric Ruijsenaars model is known since a~long time: it~is given by the celebrated Macdonald polynomials~\cite{MacD}, and a~construction of~eigenfunctions of~the hyperbolic model was completed recently~\cite{HR}, only partial results about the explicit solution in the general elliptic case exist \cite{FV,R09A,R09B}. Recen\-tly, one of~us~(S) conjectured an explicit solution of~the elliptic Ruijsenaars model as a~limit of~special functions defined by explicit formal power series and called non-stationary Ruijsenaars functions \cite{S}. In particular, it~was shown in~\cite{S} that these functions reduce to the known solutions of~the trigonometric Ruijsenaars model in the trigonometric limit; they have several remarkable symmetry properties; and they arise in a~quantum field theory related to the elliptic Ruijsenaars system in a~way that is a~natural generalization of~how the known solutions of~the trigonometric Ruijsenaars model arise in a~quantum field theory related to the trigonometric Ruijsenaars model (this is only a~partial list of~results in~\cite{S}). The validity of~this conjecture was also tested by symbolic computer computations.

In this paper we prove some properties of~the non-stationary Ruijsenaars functions which, we~hope, will be useful to find proofs of~the conjectures in~\cite{S}.
In particular, we give alternative representations of~these functions which are simpler than the original definitions; we prove that the series defining these functions are absolutely convergent in a~suitable domain;
and we present novel difference operators, called~$\cT$, which, we conjecture, acts diagonally on the non-stationary Ruijsenaars functions (by this we mean that the latter are eigenfunctions of~the former).

\medskip

\noindent {\bf Notation:} Throughout the paper, the symbols $q$, $t$, $p$, $\kappa$ (complex parameters) and~$N$ (variable number) have special significance. We~use the following standard notation,
\begin{gather*}
(z;q)_\infty \equiv \prod_{n=0}^\infty\big(1-zq^n\big)\qquad (|q|<1),
\\
(z;q)_k\equiv \frac{(z;q)_\infty}{(q^k z;q)_\infty} \qquad (k\in\Z),
\\
(z;q,p)_\infty\equiv \prod_{n,m=0}^\infty \big(1-q^np^m z\big)\qquad (|q|<1,\ |p|<1),
\\
\theta(z;p) \equiv (z;p)_\infty(p/z;p)_\infty
\end{gather*}
for~$z\in\C$. Moreover, $T_{q,z}=q^{z\partial_z}$, i.e.,
\begin{gather*}
(T_{q,z}f)(z)=f(qz)
\end{gather*}
for functions $f(z)$ of~$z\in\C$.
For $z\in\C$, $\Re(z)$ and $\Im(z)$ are the real- and imaginary parts of~$z$, and $\sin\arg(z) = \Im(z)/|z|$. For $x=(x_1,\dots,x_N)$ and $\lambda=(\lambda_1,\dots,\lambda_N)$, $x^\lambda$ is short for~$x_1^{\lambda_1}\cdots x_N^{\lambda_N}$, $x^{-1}$ is short for~$\big(x_1^{-1},\dots,x_N^{-1}\big)$, and $x^{+1}=x$.
We~denote as $\C[[z_1,\dots,z_N]]$ the space of~all formal power series $f(z)=\sum_{\mu\in\Z_{\geq 0}^N}c_\mu z_1^{\mu_1}\cdots z_N^{\mu}$ in formal variables $z=(z_1,\dots, z_N)$ with complex coefficients~$c_\mu$.

\section{Prerequisites}
We~recall some known facts about the Macdonald polynomials \cite{MacD} and certain special functions generalizing the Macdonald polynomials and constructed so as to solve the trigonometric Ruijsenaars model \cite{NS,S05} (Section~\ref{sec:trig}). We~also recall the eigenvalue problem defining the elliptic Ruijsenaars model, and the definition of~the non-stationary Ruijsenaars functions (Section~\ref{sec:nonstat}).

\subsection{Trigonometric Ruijsenaars model}\label{sec:trig}
For fixed $N\in\Z_{\geq 1}$, the Macdonald polynomials $P_\lambda(x;q,t)=P_\lambda\big(x;q^{-1},t^{-1}\big)$ are symmetric polynomials in variables $x=(x_1,\dots,x_N)\in\C^N$ depending on two complex parameters $q$, $t$ and labeled by partitions $\lambda$ of~length less than or equal to $N$, i.e., $\lambda=(\lambda_1,\dots,\lambda_N)$ with $\lambda_i\in \Z_{\geq 0}$ such that $\lambda_1\geq \lambda_2\geq \dots \geq \lambda_N\geq 0$.
They can be defined as common eigenfunctions of~the following commuting {\em Macdonald--Ruijsenaars operators},
\begin{gather}\label{MR}
D^{\pm}_N(x|q,t) \equiv \sum_{i=1}^N \prod_{j\neq i}^N \frac{\big(1-t^{\pm 1}x_i/x_j\big)}{(1-x_i/x_j)} T_{q,x_i}^{\pm 1}
\end{gather}
with corresponding eigenvalues $\sum_{i=1}^N t^{\pm(N-i)}q^{\pm{\lambda_i}}$, together with a~convenient normalization condition~\cite{MacD}.

The operators $D^\pm_N(x|q,t)$ are related by similarity transformations to the operators defining the trigonometric Ruijsenaars model \cite{R87}.

As conjectured by one of~us (S) \cite{S05} and proved by two of~us (NS) \cite{NS}, these eigenfunctions are naturally generalized to a~special function $f_N(x|s|q,t)$ depending on another set of~variables, $s=(s_1,\dots,s_N)\in\C^N$, and determined by the following requirement, up to normalization:
for~$\lambda\in\C^N$, the function
\begin{gather}
\label{fs}
x^{\lambda} f_N(x|s|q,t),\qquad s_i=t^{N-i}q^{\lambda_i}
\end{gather}
is a~common eigenfunction of~$D^\pm_N(x|q,t)$ with corresponding eigenvalue $\sum_{j=1}^N s_j^{\pm 1}$; if $\lambda$ is a~partition, then the function in~\eqref{fs} is equal to the Macdonald polynomial $P_\lambda(x;q,t)$ \cite{NS}.
The function $f_N(x|s|q,t)$ is called the {\em asymptotically free solution of~the trigonometric Ruijsenaars model}.

One remarkable property of~this function is that it~has a~simple explicit series representation which converges absolutely in a~suitable domain \cite{NS}:\footnote{Note that $f_N(x|s|q,t)$ here is $p_N(x;s|q,t)$ in \cite{NS}.}
\begin{gather}
\label{fN}
f_N(x|s|q,t)= \sum_{\theta\in\mathsf{M}_N} c_{N}(\theta|s|q,t) \prod_{1\leq i<k\leq N}(x_k/x_i)^{\theta_{ik}}
\end{gather}
with $\mathsf{M}_N$ the set of~$N\times N$ strictly upper triangular matrices with nonnegative integer entries:
\begin{gather}
\label{MMN}
\mathsf{M}_N\equiv \big\{ \theta=(\theta_{ik})_{i,k=1}^N\,|\, \theta_{ik}\in\Z_{\geq 0}\ (\forall\, i,k), \ \theta_{ik}=0\ (1\leq k\leq i\leq N)\big\} ,
\end{gather}
and\footnote{We~write~\eqref{cN} in a~way that emphasizes the similarity with~\eqref{cNinfty} below, for reasons that will become clear later on. Due to this, we include the empty factors for~$i=N$.}
\begin{gather}
c_{N}(\theta|s|q,t) =
\prod_{i=1}^N \prod_{i< j\leq k\leq N}
\frac{\big(q^{\sum_{a> k}(\theta_{ia} - \theta_{ja})}ts_{j}/s_i;q\big)_{\theta_{ik}}}
{\big(q^{\sum_{a> k}(\theta_{ia} - \theta_{ja})}qs_{j}/s_i;q\big)_{\theta_{ik}}}\nonumber
\\ \hphantom{c_{N}(\theta|s|q,t) =}
{}\times
\prod_{i=1}^N \prod_{i\leq j<k\leq N}
\frac{\big(q^{-\theta_{jk}-\sum_{a>k}(\theta_{ja}-\theta_{ia})}qs_{j}/ts_i ;q\big)_{\theta_{ik}}}{\big(q^{-\theta_{jk}-\sum_{a>k}(\theta_{ja}-\theta_{ia})}s_{j}/s_i ;q\big)_{\theta_{ik}}}
\label{cN}
\end{gather}
(note that~\eqref{fN}--\eqref{cN} is equivalent to (1.10)--(1.11) in~\cite{NS}).

For later reference, we also define the function\footnote{Note that $\varphi_N(x|s|q,t)$ here is $\psi_N(x;s|q,t)$ in \cite{NS}.}
\begin{gather}
\label{phiN}
\varphi_N(x|s|q,t)\equiv \prod_{1\leq i<j\leq N} \frac{(qx_j/tx_i;q)_\infty}{(qx_j/x_i;q)_\infty} f_N(x|s|q,t),
\end{gather}
which, as proved in \cite{NS}, has the following remarkably symmetry properties:
\begin{alignat}{3}
&\varphi_N(x|s|q,t)= \varphi_N(s|x|q,t) \quad && \text{(bispectral duality)},& \nonumber\\
& \varphi_N(x|s|q,t)= \varphi_N(s|x|q,q/t) \quad && \text{(Poincar\'e duality)}.& \label{symmetry}
\end{alignat}

\subsection{Non-stationary Ruijsenaars functions}\label{sec:nonstat}

The analogue of~the operators in~\eqref{MR} for the elliptic Ruijsenaars model depends on a~further complex parameter, $p$ such that $|p|<1$:{\samepage
\begin{gather}\label{R}
D_N^\pm(x|q,t,p) \equiv \sum_{i=1}^N \prod_{j\neq i}^N \frac{\theta\big(t^{\pm 1}x_i/x_j;p\big)}{\theta(x_i/x_j;p)} T_{q,x_i}^{\pm 1}
\end{gather}
with the theta function $\theta(z;p)$ given in the introduction; note that $D_N^\pm(x|q,t)=D_N^\pm(x|q,t,0)$.}

The non-stationary Ruijsenaars function $f^{\widehat{\mathfrak{gl}}_N}(x,p|s,\kappa|q,t)$ is a~conjectured eigenfunction of~a~deformation of~the operators in~\eqref{R}, depending on a~further complex parameter, $\kappa$, and reducing to the operators in~\eqref{R} in the limit $\kappa\to 1$ \cite{S}.

\begin{Definition}[non-stationary Ruijsenaars functions]\label{def1}
For $N\in\Z_{\geq 1}$, four parameters $q$, $t$, $p$, $\kappa$, and two sets of~variables $x=(x_1,\dots,x_N)$ and $s=(s_1,\dots,s_N)$, the non-stationary Ruijsenaars function is defined as a~formal power series in $(px_2/x_1,\dots,px_{N}/x_{N-1},px_1/x_N)$ as follows,
\begin{gather}\label{f1}
f^{\widehat{\mathfrak{gl}}_N}(x,p|s,\kappa|q,t)\equiv \!\!\! \sum_{\lambda^{(1)},\dots,\lambda^{(N)}\in\mathsf{P}} \prod_{i,j=1}^N\frac{\mathsf{N}^{(j-i|N)}_{\lambda^{(i)},\lambda^{(j)}}(ts_j/s_i|q,\kappa)}
{\mathsf{N}^{(j-i|N)}_{\lambda^{(i)},\lambda^{(j)}}(s_j/s_i|q,\kappa)}
\prod_{\beta=1}^N\prod_{\alpha\geq 1}(p x_{\alpha+\beta}/tx_{\alpha+\beta-1})^{\lambda^{(\beta)}_\alpha}\!\!\!
\end{gather}
with $x_{\alpha+\ell N}\equiv x_{\alpha}$ for all $\alpha=1,\dots,N$ and $\ell\in\Z_{\geq 1}$, $\mathsf{P}$ the set of~all partitions $\lambda$ of~arbitrary length, i.e., $\lambda=(\lambda_1,\lambda_2,\dots)$ with $\lambda_i\in\Z_{\geq 0}$ such that $\lambda_1\geq \lambda_2\geq \cdots$ and $\lambda_i=0$ for~$i\gg 0$, and
\begin{gather}
\label{N1}
\mathsf{N}^{(k|N)}_{\lambda,\mu}(u|q,\kappa)\equiv \!\!\!\!\!\!\!\!
\prod_{\setatop{b\geq a\geq 1}{b-a\equiv k(\mathrm{mod}N)}}\!\!\!\!\!\!\!\!
\big(u q^{-\mu_a+\lambda_{b+1}}\kappa^{-a+b};q\big)_{\lambda_{b}-\lambda_{b+1}}
\!\!\!\!\!\!\!\!\!\!\!\!\!\!\!\prod_{\setatop{\beta\geq \alpha\geq 1}{\beta-\alpha\equiv (-k-1)(\mathrm{mod}N)}}\!\!\!\!\!\!\!\!\!\!\!\!\!
\big(uq^{\lambda_\alpha-\mu_\beta}\kappa^{\alpha-\beta-1};q\big)_{\mu_\beta-\mu_{\beta+1}}\!\!\!
\end{gather}
for~$\lambda$, $\mu\in\mathsf{P}$, $k\in\Z/N\Z$, and $u\in\C$.
\end{Definition}

As discussed in \cite{S}, the expressions in~\eqref{N1} are {\em Nekrasov factors} \cite{N}.
Moreover, by chan\-ging $(p,\kappa)\to\big(p^{1/N},\kappa^{1/N}\big)$ and scaling variables $x\to p^{\delta/N}x=\big(p^{(N-1)/N}x_1,p^{(N-2)/2}x_2,\dots,x_N\big)$ and~similarly for~$s$, one obtains a~function, $f^{\widehat{\mathfrak{gl}}_N}\big(p^{\delta/N}x,p^{1/N}|\kappa^{\delta/N} s,\kappa^{1/N}|q,q/t\big)$, that converges to the asymptotically free solution of~the Ruijsenaars model, $f_N(x|s|q,t)$, in the limit $p\to 0$~\cite{S}.

\begin{Remark}\label{rem:balanced}
To explain the scaling just mentioned, we point out one important technical point: in Definition~\ref{def1}, equations~\eqref{phiNinfty} and~\eqref{symmetry1} below, and equations~\eqref{cTell2} and~\eqref{cTBeq} in~Section~\ref{sec:cT}, we use {\em balanced coordinates} $x_B$, $p_B$, $s_B$, $\kappa_B$ and $t_B$ (written without the subscript~$B$ for~simplicity), whereas elsewhere in the paper we use {\em unbalanced coordinates} $x_U$, $p_U$, $s_U$, $\kappa_U$ and $t_U$ (also written without subscript $U$) related to the balanced coordinates as follows,
\begin{gather*}
(x_B)_i = (p_U)^{(N-i)/N}(x_U)_i \qquad (i=1,\dots,N), \qquad p_B =(p_U)^N,
\\
(s_B)_i = (\kappa_U)^{(N-i)/N}(s_U)_i \qquad (i=1,\dots,N), \qquad
\kappa_B=(\kappa_U)^N,\qquad t_B = q/t_U .
\end{gather*}
Thus, the scaling just described can be understood as a~transformation from balanced to unbalanced coordinates.
\end{Remark}

The main conjecture in \cite{S} is that eigenfunction of~the operator in~\eqref{R} can be obtai\-ned by dividing this rescaled function $f^{\widehat{\mathfrak{gl}}_N}\big(p^{\delta/N}x,p^{1/N}|\kappa^{\delta/N} s,\kappa^{1/N}|q,q/t\big)$ by a~(known) factor $\alpha\big(p^{1/N}|\kappa^{\delta/N}s,\kappa^{1/N}|q,t\big)$ and taking the limit $\kappa\to 1$; see Conjecture~1.14 in~\cite{S}. One important open problem is to find the operator depending on $\kappa$ having these rescaled non-stationary Ruijsenaars functions as eigenfunctions and reducing to the Macdonald--Ruijsenaars operator in~\eqref{R} in the limit $\kappa\to 1$.\footnote{There is, however, a~recent proposal mentioned in Section~\ref{sec:remarks}.} At this point, this operator is only known in limiting cases: the non-relativistic limit $q\to 1$ where the Ruijsenaars systems reduce to the non-stationary elliptic Calogero--Sutherland system \cite{S}, and the limit $t,p\to 0$ with fixed $p/t$ leading to the affine Toda system~\cite{S}. We~stress that the non-stationary $\cT$-operators introduced in this paper do {\em not} reduce to the elliptic Macdonald--Ruijsenaars operators in the limit $\kappa\to 1$: the $\cT$-operators are of~a different kind, and they are new even in the trigonometric limit; only the affine Toda limit of~the non-stationary $\cT$-operator was known before~\cite{S}.

A natural generalization of~the function in~\eqref{phiN} is
\begin{gather}
\varphi^{\widehat{\mathfrak{gl}}_N}(x,p|s,\kappa|q,t)\equiv
\prod_{1\leq i<j\leq N} \frac{\big(qp^{j-i}x_j/tx_i;q,p^N\big)_\infty}{\big(qp^{j-i}x_j/x_i;q,p^N\big)_\infty}\nonumber
\\ \hphantom{\varphi^{\widehat{\mathfrak{gl}}_N}(x,p|s,\kappa|q,t)\equiv}
{}\times\prod_{1\leq i\leq j\leq N} \frac{\big(qp^{N-j+i}x_i/tx_j;q,p^N\big)_\infty}{\big(qp^{N-j+i}x_i/x_j;q,p^N\big)_\infty}
 f^{\widehat{\mathfrak{gl}}_N}(x,p|s,\kappa|q,t)
\label{phiNinfty}
\end{gather}
and, as conjectured in \cite{S}, it~has the following symmetry properties generalizing
the ones in~\eqref{symmetry}.

\begin{Conjecture}The functions in~\eqref{phiNinfty} satisfy
\begin{alignat}{3}
& \varphi^{\widehat{\mathfrak{gl}}_N}(x,p|s,\kappa|q,t)= \varphi^{\widehat{\mathfrak{gl}}_N}(s,\kappa|x,p|q,t)\quad && \text{$($bispectral duality$)$},&\nonumber\\
&\varphi^{\widehat{\mathfrak{gl}}_N}(x,p|s,\kappa|q,t)= \varphi^{\widehat{\mathfrak{gl}}_N}(x,p|s,\kappa|q,q/t) \quad && \text{$($Poincar\'e duality$)$}.& \label{symmetry1}
\end{alignat}
\end{Conjecture}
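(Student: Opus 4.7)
The plan is to establish both dualities directly from the series definition \eqref{f1}, bootstrapping from the trigonometric case \eqref{symmetry} that was proved in \cite{NS}. Since $\varphi^{\widehat{\mathfrak{gl}}_N}$ differs from $f^{\widehat{\mathfrak{gl}}_N}$ only by an explicit double infinite product, I would first check that this product alone is manifestly symmetric under each proposed transformation (in the bispectral case, this uses the $(z;q,p^N)_\infty$ factors and a relabelling of the indices $i\leftrightarrow j$), so that the entire content of the conjecture reduces to identities satisfied by the Nekrasov-factor coefficients in \eqref{N1}. Both sides of each duality are formal power series in $(px_{\beta+1}/x_\beta)$ and $(\kappa s_{j+1}/s_j)$, so it suffices to match coefficients at each multi-degree; the trigonometric case corresponds to the degree-zero term in $p$ (resp.\ $\kappa$), where Remark~\ref{rem:balanced} together with \eqref{symmetry} provides the base case.

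For the Poincar\'e duality $t\mapsto q/t$, the only $t$-dependence in each summand of \eqref{f1} sits in the ratio $\mathsf{N}^{(j-i|N)}_{\lambda^{(i)},\lambda^{(j)}}(ts_j/s_i|q,\kappa)/\mathsf{N}^{(j-i|N)}_{\lambda^{(i)},\lambda^{(j)}}(s_j/s_i|q,\kappa)$. I would apply the elementary reflection identity
\begin{gather*}
(u;q)_n = (-u)^n q^{n(n-1)/2}\big(q^{1-n}/u;q\big)_n
\end{gather*}
to each $q$-Pochhammer block in both the numerator and the denominator of \eqref{N1}, and check that the accumulated signs and $q$-powers telescope against the corresponding transformation of the infinite-product prefactor in \eqref{phiNinfty}. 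This is essentially a book-keeping exercise reducing to a partition-pair-wise identity of the form $\prod(\cdots)=1$; no new rearrangement of the summation indices is needed.

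For the bispectral duality $(x,p)\leftrightarrow(s,\kappa)$ the situation is far more delicate, since the two sides are multi-sums over different families of partitions: on the left the partitions $\lambda^{(\beta)}$ are weighted by monomials in $(px_{\alpha+\beta}/tx_{\alpha+\beta-1})$, while on the right they would be weighted by monomials in $(\kappa s_{j+1}/ts_j)$-type quantities. The natural approach is to seek a non-stationary / elliptic generalization of the Kajihara-type multiple basic hypergeometric transformation used in \cite{NS} to prove \eqref{symmetry}, which would interchange the index pairs $(i,\alpha)\leftrightarrow(j,\beta)$ appearing in \eqref{N1}. An alternative route is to invoke the gauge-theoretic / AGT interpretation in which $f^{\widehat{\mathfrak{gl}}_N}$ arises as a normalized Nekrasov partition function, and bispectral duality is expected to follow from an $S$-duality of the underlying $\widehat{\mathfrak{gl}}_N$ theory.

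The main obstacle is unquestionably the bispectral statement: to my knowledge no Kajihara-type transformation for such Nekrasov-factor sums is available in the required generality, which is precisely why \cite{S} records the statement as a conjecture. A realistic intermediate target would be to prove it order by order in the $p$-expansion, with the leading order supplied by \cite{NS} and higher orders controlled by a recursion in the total degree; the Poincar\'e duality, provable as sketched above, then serves as an independent consistency check. The combination of these two partial results, together with the known $p\to 0$ and $\kappa\to 1$ limits, would already pin down the symmetries at all orders that can be reached by the recursion and would narrow the conjecture to a verifiable combinatorial identity on Nekrasov factors.
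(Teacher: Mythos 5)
This statement is one of the paper's (and \cite{S}'s) open conjectures: the paper offers no proof of \eqref{symmetry1}, only the reformulation in Fact~\ref{fact:Id}, where Theorem~\ref{thm1} and Lemma~\ref{lem:Id2} are used to show that \eqref{symmetry1} is \emph{equivalent} to the dualities \eqref{symmetry2} for $\varphi_{N,\infty}$ under the identifications \eqref{period1}. Your proposal also does not prove the statement --- you concede the bispectral half outright --- so at best it is a research plan; but beyond that, the part you claim is provable (Poincar\'e duality) rests on steps that fail as described. First, it is not true that the only $t$-dependence of a summand of \eqref{f1} sits in the Nekrasov ratio: the expansion monomials $\prod_{\beta,\alpha}(px_{\alpha+\beta}/tx_{\alpha+\beta-1})^{\lambda^{(\beta)}_\alpha}$ carry an extra factor $t^{-|\vlam|}$, which your book-keeping ignores. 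Second, the prefactor in \eqref{phiNinfty} is \emph{not} symmetric under either transformation (under $t\to q/t$ its numerator and denominator genuinely change, and under $(x,p)\leftrightarrow(s,\kappa)$ it depends on the wrong set of variables altogether), and, crucially, it is itself a nontrivial power series in $p$; multiplying the sum \eqref{f1} by it convolves contributions from different multi-partitions at each order in $p$. Hence the Poincar\'e statement cannot reduce to a termwise, partition-pair-wise identity $\prod(\cdots)=1$ with ``no rearrangement of the summation indices'': already for $N=1$ the claimed identity reads $(qp/t;q,p)_\infty\sum_\lambda\frac{\mathsf{N}^{(0|1)}_{\lambda\lambda}(t|q,\kappa)}{\mathsf{N}^{(0|1)}_{\lambda\lambda}(1|q,\kappa)}(p/t)^{|\lambda|}=(pt;q,p)_\infty\sum_\lambda\frac{\mathsf{N}^{(0|1)}_{\lambda\lambda}(q/t|q,\kappa)}{\mathsf{N}^{(0|1)}_{\lambda\lambda}(1|q,\kappa)}(pt/q)^{|\lambda|}$, where matching coefficients of $p^n$ mixes all $|\lambda|\le n$, so a factor-by-factor reflection identity $(u;q)_n=(-u)^nq^{n(n-1)/2}(q^{1-n}/u;q)_n$ cannot settle it.

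Note also that even in the trigonometric limit the ``no-swap'' identity $\varphi_N(x|s|q,t)=\varphi_N(x|s|q,q/t)$ is obtained in \cite{NS} only as the \emph{combination} of bispectral and Poincar\'e dualities, both of which require Kajihara--Noumi-type multiple hypergeometric transformations rather than elementary Pochhammer manipulations; so there is no reason to expect the non-stationary Poincar\'e half to be ``essentially book-keeping'' while the bispectral half is hard --- this is precisely why both are recorded as conjectures here. Your suggested order-by-order strategy in $p$ with the \cite{NS} base case is a reasonable research direction (and is consistent with the spirit of Section~\ref{sec:results}, which provides the convergent series representations one would need for it), but as written the proposal contains no complete argument for either duality.
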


\section{Results on the non-stationary Ruijsenaars function}\label{sec:results}

We~give alternative series representations of~the non-stationary Ruijsenaars functions (Section~\ref{sec:alternative}) and prove convergence of~these series in a~suitable domain (Section~\ref{sec:convergence}).
\subsection{Alternative series representations}
\label{sec:alternative}
Our first result makes manifest that the non-stationary Ruijsenaars function in~\eqref{f1}--\eqref{N1} is a~natural generalization of~the asymptotically free solutions of~the trigonometric Ruijsenaars model in~\eqref{fN}--\eqref{cN} .
For that, we extend the variables $x=(x_i)_{i=1}^N$ and $s=(s_i)_{i=1}^N$ to infinitely many variables $\bx=(x_i)_{i=1}^\infty$ and $\bs=(s_i)_{i=1}^\infty$; as we will see, the pertinent extension
is provided by the parameters $p$ and $\kappa$, respectively --- see~\eqref{period1}.

We~first introduce a~natural generalization of~the function in~\eqref{fN}--\eqref{cN} to infinitely many variables.
\begin{Definition}
\label{def2}
For $N\in\Z_{\geq 1}$, two parameters $q$, $t$, and two sets of~infinitely many variables $\bx=(x_1,x_2,\dots)$ and $\bs=(s_1,s_2,\dots)$, let the following define a~formal power series in the infinitely many variables
$(x_2/x_1,x_3/x_2,x_4/x_3,\dots)$,
\begin{gather}
\label{fNinfty}
 f_{N,\infty}(\bx|\bs|q,t) \equiv \sum_{\theta\in\hat{\mathsf{M}}_N} c_{N,\infty}(\theta|\bs|q,t) \prod_{i=1}^N \prod_{k>i} (x_k/x_i)^{\theta_{ik}}
\end{gather}
with $\hat{\mathsf{M}}_N$ the set of~infinite, $N$-periodic, strictly upper triangular matrices with nonnegative integer entries which are non-zero only in a~finite strip away from the diagonal:
\begin{gather}
\label{hMMN}
\hat{\mathsf{M}}_N\equiv \big\{ \theta=(\theta_{ik})_{i,k=1}^\infty\,|\, \theta_{ik}=\theta_{i+N,k+N}\in\Z_{\geq 0}\ (i,k\geq 1), \ \theta_{ik}=0\ (k\leq i,\ k\gg i)\big\} ,
\end{gather}
and
\begin{gather}
c_{N,\infty}(\theta|\bs|q,t)
\equiv
\prod_{i=1}^N \prod_{i< j\leq k< \infty}
\frac{\big(q^{\sum_{a> k}(\theta_{ia} - \theta_{ja})}ts_{j}/s_i;q\big)_{\theta_{ik}}}{\big(q^{\sum_{a> k}(\theta_{ia} - \theta_{ja})}qs_{j}/s_i;q\big)_{\theta_{ik}}}\nonumber
\\ \hphantom{c_{N,\infty}(\theta|\bs|q,t)\equiv }
{}\times\prod_{i=1}^N \prod_{i\leq j<k<\infty}
\frac{\big(q^{-\theta_{jk}-\sum_{a>k}(\theta_{ja}-\theta_{ia})}qs_{j}/ts_i ;q\big)_{\theta_{ik}}}{\big(q^{-\theta_{jk}-\sum_{a>k}(\theta_{jb}-\theta_{ia})}s_{j}/s_i ;q\big)_{\theta_{ik}}} .
\label{cNinfty}
\end{gather}

\end{Definition}

Note that the product in~\eqref{cNinfty} always contains only a~finite number of~factors different from $1$. Moreover, by the condition $\theta_{ik}=\theta_{i+N,k+N}$,
a matrix $\theta\in\hat{\mathsf{M}}_N$ is fully determined by the matrix elements $\theta_{ik}$ for~$1\leq i\leq N$ and $1\leq k<\infty$.
Furthermore, matrices in $\mathsf{M}_N$ can be naturally identified with matrices $\theta$ in $\hat{\mathsf{M}}_N$ by setting $\theta_{ik}=0$ if $i>N$, or $k>N$, or both.

To state out result we use the $N$-vector $\delta\equiv (\delta_1,\dots,\delta_N)$ with $\delta_i=N-i$, and the notation $p^{\delta/N}x$ and $\kappa^{\delta/N}s$ for the $N$-vectors with components
$\big(p^{\delta/N} x\big)_i=p^{(N-i)/N}x_i$ and $\big(\kappa^{\delta/N} s\big)_i=\kappa^{(N-i)/N}s_i$, respectively ($i=1,\dots,N$). As explained in Remark~\ref{rem:balanced}, this can be understood as a~transformation going from balanced to unbalanced coordinates.

\begin{Theorem}\label{thm1}
The non-stationary Ruijsenaars function in~\eqref{f1}--\eqref{N1} is related to the function in~\eqref{fNinfty}--\eqref{cNinfty} as follows,
\begin{gather}\label{Identity1}
f^{\widehat{\mathfrak{gl}}_N}\big(p^{\delta/N}x,p^{1/N}|\kappa^{\delta/N} s,\kappa^{1/N}|q,q/t\big) = f_{N,\infty}(\bx|\bs|q,t)
\end{gather}
with the variables $x=(x_i)_{i=1}^N$ and $s=(s_i)_{i=1}^N$ on the left-hand side extended to variables $\bx=(x_i)_{i=1}^\infty$ and $\bs=(s_i)_{i=1}^\infty$ on the right-hand side by the rules\footnote{``$x_{i+N}=px_i$ $(i\geq 1)$" is short for ``$x_{i+kN}=p^kx_i$ $(i=1,\dots,N,\; k\in\Z_{\geq 1})$".}
\begin{gather}
\label{period1}
x_{i+N}=px_i,\qquad s_{i+N}=\kappa s_i \qquad (i\geq 1).
\end{gather}
\end{Theorem}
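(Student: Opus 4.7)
The plan is to establish an explicit bijection between the two summation indices and match the summands term-by-term. Given $(\lambda^{(1)},\dots,\lambda^{(N)})\in\mathsf{P}^N$, I would define $\theta\in\hat{\mathsf{M}}_N$ by $\theta_{\beta,\beta+m}:=\lambda^{(\beta)}_m-\lambda^{(\beta)}_{m+1}$ for $1\leq\beta\leq N$ and $m\geq 1$, and extend by the $N$-periodicity of \eqref{hMMN}. This is a bijection with inverse $\lambda^{(\beta)}_m=\sum_{n\geq m}\theta_{\beta,\beta+n}$, and it already aligns the $q$-Pochhammer lengths appearing on the two sides: the lengths $\lambda^{(i)}_b-\lambda^{(i)}_{b+1}$ and $\lambda^{(j)}_\beta-\lambda^{(j)}_{\beta+1}$ in the two halves of the Nekrasov factor \eqref{N1} become $\theta_{i,i+b}$ and $\theta_{j,j+\beta}$, which is precisely the type of length appearing in \eqref{cNinfty}.

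Next I would match the monomials. A short case check modulo $N$ shows $p^{1/N}\tilde x_{\alpha+\beta}/\tilde x_{\alpha+\beta-1}=\bx_{\alpha+\beta}/\bx_{\alpha+\beta-1}$ (where the $x_{i+\ell N}=x_i$ convention of \eqref{f1} is used on the left while $\bx_{i+N}=p\bx_i$ of \eqref{period1} is used on the right). Abel summation then telescopes $\prod_{\alpha\geq 1}(\bx_{\alpha+\beta}/\bx_{\alpha+\beta-1})^{\lambda^{(\beta)}_\alpha}$ into $\prod_{m\geq 1}(\bx_{\beta+m}/\bx_\beta)^{\theta_{\beta,\beta+m}}$, so that including the substitution $t\mapsto q/t$ from \eqref{Identity1} the LHS monomial equals
\begin{equation*}
(t/q)^{\sum_\beta|\lambda^{(\beta)}|}\prod_{i=1}^{N}\prod_{k>i}(\bx_k/\bx_i)^{\theta_{ik}},
\end{equation*}
whose second factor is exactly the monomial in \eqref{fNinfty}. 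The coefficient matching therefore has to supply a compensating factor $(q/t)^{\sum_\beta|\lambda^{(\beta)}|}$.

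The bulk of the proof is showing that, under the bijection, the product of Nekrasov ratios from \eqref{f1} equals $(q/t)^{\sum_\beta|\lambda^{(\beta)}|}\,c_{N,\infty}(\theta|\bs|q,t)$. I would split each factor \eqref{N1} into Part~1 (Pochhammer length $\theta_{i,i+b}$) and Part~2 (length $\theta_{j,j+\beta}$). In Part~1, use $\tilde\kappa=\kappa^{1/N}$, the congruence $b-a\equiv j-i\pmod N$, and the extension $\bs_{i+\ell N}=\kappa^\ell\bs_i$ to simplify $\tilde t\tilde s_j/\tilde s_i\cdot\tilde\kappa^{-a+b}$ to $(q/t)\bs_{j'}/\bs_i$ with $j':=i+b-a$; the interior exponent $-\lambda^{(j)}_a+\lambda^{(i)}_{b+1}$ is rewritten, via the identity $\lambda^{(j)}_a=\theta_{j',k}+\sum_{a'>k}\theta_{j',a'}$ (with $k:=i+b$, a direct consequence of the periodicity of $\theta$), as $-\theta_{j'k}+\sum_{a'>k}(\theta_{ia'}-\theta_{j'a'})$. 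Parameterizing the $b$ valid pairs $(j,a)$ for fixed $(i,b)$ by $j'\in\{i,\dots,k-1\}$ then identifies Part~1 term-by-term with the second product of \eqref{cNinfty}, with no extra factor. The analogous analysis for Part~2 yields the \emph{inverse} ratio $\bs_{i''}/\bs_{j''}$ (with $i'':=j$, $k'':=j+\beta$, $j'':=j+(\beta-\alpha+1)$); applying the Pochhammer reflection $(A;q)_n=(-A)^n q^{n(n-1)/2}(A^{-1}q^{1-n};q)_n$ to numerator and denominator and taking their ratio converts each Part~2 factor into the corresponding first-product factor of \eqref{cNinfty} times an extra $(q/t)^{\theta_{j,j+\beta}}$. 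Summing these shifts via Abel summation produces exactly $(q/t)^{\sum_\beta|\lambda^{(\beta)}|}$.

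The main obstacle is the bookkeeping in the third paragraph: correctly translating the mod-$N$ congruences into the $\bs$-index shifts via periodicity, invoking the non-trivial identity $\lambda^{(j)}_a=\theta_{j',k}+\sum_{a'>k}\theta_{j',a'}$ rather than the naive $\sum_{n\geq a}\theta_{j,j+n}$ (the shift by $\theta_{j',k}$ is precisely what generates the $-\theta_{jk}$ term in the $q$-exponents of \eqref{cNinfty}), and applying the Pochhammer reflection in Part~2 carefully so that the $(q/t)^\theta$ factors are cleanly isolated. Once these three ingredients are in place, the theorem follows by term-by-term comparison.
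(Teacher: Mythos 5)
Your proposal is correct and follows essentially the same route as the paper's proof in Appendix~\ref{app:proof}: the same bijection $\theta_{ik}=\lambda^{(i)}_{k-i}-\lambda^{(i)}_{k-i+1}$ (Lemma~\ref{lem1}), the same telescoping of the monomials producing the $(t/q)^{|\vlam|}$ factor, a direct identification of one half of each Nekrasov factor with one group of products in~\eqref{cNinfty}, and a Pochhammer inversion of the other half generating the compensating $(q/t)^{|\vlam|}$. The only difference is organizational: you solve the mod-$N$ congruences pointwise (parameterizing by $j'=i+b-a$, etc.), while the paper introduces $\ell,\ell'$ and performs a global change of summation variables $j+\ell N\to j$, $i+\ell'N\to i$ --- the underlying computation is identical.
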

(The proof is by straightforward computations given in Appendix~\ref{app:proof}.)

In the following, it~is sometimes convenient to use a~notation for the functions $f_{N,\infty}$ that emphasizes that the arguments $\bx$ and $\bs$ are fixed by $x$, $s$, $p$ and $\kappa$:

\begin{Definition}
\label{def:fNinfty}
We~write
\begin{gather*}
f_{N,\infty}(x,p|s,\kappa|q,t) \equiv f_{N,\infty}(\bx|\bs|q,t)
\end{gather*}
if $\bx\,{=}\,(x_1,x_2,\dots)$ and $\bs\,{=}\,(s_1,s_2,\dots)$ on the right-hand side are determined by $x\,{=}\,(x_1,\dots,x_N)$, $p$, $s=(s_1,\dots,s_N)$, and $\kappa$ as in~\eqref{period1}.
Thus
\begin{gather*}
f_{N,\infty}(x,p|s,\kappa|q,t) = \sum_{\theta\in\hat{\mathsf{M}}_N} c_{N,\infty}(\theta|s,\kappa|q,t)e_{N,\infty}(\theta|x,p)
\end{gather*}
with
\begin{gather}\label{cNeNdef}
c_{N,\infty}(\theta|s,\kappa|q,t) \equiv c_{N,\infty}(\theta|\bs|q,t),
\qquad
e_{N,\infty}(\theta|x,p)\equiv \prod_{i=1}^N \prod_{k=i+1}^\infty (x_k/x_i)^{\theta_{ik}}
\end{gather}
and the identifications in~\eqref{period1} on the right-hand side in~\eqref{cNeNdef}.
\end{Definition}

Theorem~\ref{thm1} makes manifest the following important result in \cite{S}: After suitably scaling the variables, the non-stationary Ruijsenaars function reduces the asymptotically free solution of~the trigonometric Ruijsenaars model, $f_N(x|s|q,t)$~\eqref{fN}--\eqref{cN}, in the limit $p\to 0$; in particular, it~becomes independent of~$\kappa$ in this limit:

\begin{Corollary}
We~have
\begin{gather}
\label{cor1}
\lim_{p\to 0} f^{\widehat{\mathfrak{gl}}_N}\big(p^{\delta/N}x,p^{1/N}|\kappa^{\delta/N} s,\kappa^{1/N}|q,q/t\big) = f_{N}(x|s|q,t).
\end{gather}
\end{Corollary}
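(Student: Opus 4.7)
The plan is to combine Theorem~\ref{thm1} with a term-by-term analysis of the $p\to 0$ limit of the series $f_{N,\infty}(\bx|\bs|q,t)$. After rewriting the left-hand side of~\eqref{cor1} via~\eqref{Identity1}, the variables become $\bx=(x_1,\dots,x_N,px_1,px_2,\dots)$ and $\bs=(s_1,\dots,s_N,\kappa s_1,\dots)$. Splitting each exponent $\theta_{ik}$ according to $k=k'+\ell N$ with $1\le k'\le N$, $\ell\ge 0$, the monomial factor becomes
\begin{gather*}
\prod_{i=1}^N\prod_{k>i}(\bx_k/\bx_i)^{\theta_{ik}}
= p^{E(\theta)}\prod_{i=1}^N\prod_{k>i}(x_{k'}/x_i)^{\theta_{ik}},\qquad
E(\theta)=\sum_{i=1}^N\sum_{k>i}\lfloor (k-1)/N\rfloor\,\theta_{ik}.
\end{gather*}
Thus every summand with some $\theta_{ik}\neq 0$ for $k>N$ carries a strictly positive power of~$p$, while summands with $\theta_{ik}=0$ for all $k>N$ are $p$-independent; the latter are in bijection with $\mathsf{M}_N\subset\hat{\mathsf{M}}_N$ (the embedding by extension by zero, described after Definition~\ref{def2}).

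Next I would verify that on the surviving summands the coefficient $c_{N,\infty}(\theta|\bs|q,t)$ reduces to $c_N(\theta|s|q,t)$. When $\theta_{ik}=0$ for $k>N$, the $q$-Pochhammer symbols $(\,\cdot\,;q)_{\theta_{ik}}$ in~\eqref{cNinfty} with $k>N$ are trivially $1$, so the double products truncate to $i<j\leq k\leq N$ and $i\leq j<k\leq N$; moreover in the remaining factors only $\bs_j$ with $j\le N$ occur, and these equal $s_j$, while the sums $\sum_{a>k}$ truncate to $a=k+1,\dots,N$ because $\theta_{ia}=\theta_{ja}=0$ for $a>N$. Comparison with~\eqref{cN} is then immediate.

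The remaining step is to justify the interchange of $\lim_{p\to 0}$ with the infinite sum over $\hat{\mathsf{M}}_N$. I would do this by invoking the absolute convergence established in Section~\ref{sec:convergence}: the bounds there (which are manifestly monotone in $|p|$ for $p$ near $0$) produce a $p$-independent, summable dominant, so dominated convergence applies. Alternatively, one can group the summation according to the value of $E(\theta)$; since $E(\theta)$ is a nonnegative integer and for each fixed value only finitely many $\theta$ contribute, one obtains the expansion
\begin{gather*}
f_{N,\infty}(\bx|\bs|q,t)=\sum_{n\ge 0} p^n\,F_n(x|s,\kappa|q,t),
\end{gather*}
and $F_0(x|s,\kappa|q,t)$ is the $\kappa$-independent finite sum matching~\eqref{fN}--\eqref{cN}.

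The main obstacle is this last interchange. The convergence proved in Section~\ref{sec:convergence} is on a domain in the $\bx$-variables, not directly in the parameter~$p$, so one must check that the domain of convergence can be chosen uniformly for $p$ in a neighbourhood of~$0$ (equivalently, that the estimates controlling~$c_{N,\infty}$ extend to this regime). Once this uniformity is in place, the grouping by $E(\theta)$ (or dominated convergence) yields~\eqref{cor1} cleanly, and the identification of the $p=0$ term with $f_N(x|s|q,t)$ shows in particular that the $\kappa$-dependence disappears in the limit.
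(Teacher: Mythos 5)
Your proposal follows essentially the same route as the paper: the paper's proof also reduces \eqref{cor1} via Theorem~\ref{thm1} to $\lim_{p\to 0}f_{N,\infty}(x,p|s,\kappa|q,t)=f_N(x|s|q,t)$, observes that by \eqref{period1} every summand with $\theta_{ik}\neq 0$ for some $k>N$ carries a positive power of $p$ and hence drops out, so the sum over $\hat{\mathsf{M}}_N$ collapses to $\mathsf{M}_N$, and notes that for $\theta\in\mathsf{M}_N$ the coefficients \eqref{cNinfty} do not involve $s_{i>N}$ and coincide with \eqref{cN}; the paper treats the interchange of limit and sum as immediate, the series being a formal power series in $x_2/x_1,\dots,x_N/x_{N-1},px_1/x_N$ so that $p\to 0$ acts coefficient-wise. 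Your extra care about the interchange is reasonable, and your dominated-convergence route does close it: the domain \eqref{xcond} in Theorem~\ref{thm:converegence} explicitly contains $|p|<\rho^N$, and the bounds $\rho^{|\vlam|}$ and $C_1^{|\vlam|}C_2^{|\vlam|}$ from \eqref{estim} are independent of $p$, so the dominating series is $p$-free; the uniformity you single out as the ``main obstacle'' is thus already built into the statement of that theorem.

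One subsidiary claim in your alternative argument is false: it is not true that for each fixed value of $E(\theta)$ only finitely many $\theta$ contribute. Already $E(\theta)=0$ comprises all of $\mathsf{M}_N$, which is infinite, so $F_0$ (and each $F_n$) is an infinite formal sum, not a finite one — in particular $f_N(x|s|q,t)$ in \eqref{fN} is itself an infinite series. The grouping by powers of $p$ is nevertheless legitimate, but it must be justified coefficient-wise in the formal variables: for a fixed monomial in $x_2/x_1,\dots,x_N/x_{N-1},px_1/x_N$ only finitely many $\theta\in\hat{\mathsf{M}}_N$ contribute (its total degree equals $\sum_{i\leq N,\,k>i}(k-i)\theta_{ik}$), and every $\theta\notin\mathsf{M}_N$ produces a monomial containing a positive power of $px_1/x_N$, which is exactly the paper's formal reading of the limit. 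With that correction, or simply by relying on your dominated-convergence alternative, your proof is complete.
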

\begin{proof}
By Theorem~\ref{thm1},~\eqref{cor1} is equivalent to
\begin{gather*}
\lim_{p\to 0} f_{N,\infty}(x,p|s,\kappa|q,t) =f_N(x|s|q,t),
\end{gather*}
but this is obvious from definitions: by~\eqref{period1}, $(x_k/x_i)\to 0$ for~$k>N$ as $p\to 0$; therefore, the sum over $\theta\in\hat{\mathsf{M}}_N$ on the right-hand side in~\eqref{fNinfty} collapses to a~sum over $\theta\in \mathsf{M}_N$ in this limit; obviously, for~$\theta\in\mathsf{M}_N$, the coefficients $c_{N,\infty}(\theta|\bs|q,t)$ in~\eqref{cNinfty} do not depend on $s_{i>N}$ and are identical with the coefficients $c_N(\theta|s|q,t)$ in~\eqref{cN}.
\end{proof}

We~prove Theorem~\ref{thm1} by a~direct computation in Appendix~\ref{app:proof}. This proof uses an alternative representation of~the function $f_{N,\infty}(\bx|\bs|q,t)$ which is interesting in its own right:

\begin{Lemma}
\label{lem1}
The formal power series in~\eqref{fNinfty}--\eqref{cNinfty} can be written as
\begin{gather}
\label{fNinfty1}
 f_N(\bx|\bs|q,t) = \sum_{\vlam\in\mathsf{P}^N} C_{N,\infty}(\vlam|\bs|q,t) \prod_{i=1}^N\prod_{k\geq 1}(x_{i+k}/x_{i+k-1})^{\lambda^{(i)}_k},
\end{gather}
with $\mathsf{P}^N$ the set of~all $N$-partitions $\vlam=\big(\lambda^{(1)},\lambda^{(2)},\dots,\lambda^{(N)}\big)$, $\lambda^{(i)}$ a~partition of~arbitrary length for~$i=1,\dots,N$, and
\begin{gather}
C_{N,\infty}(\vlam|\bs|q,t) =
\prod_{i=1}^N \prod_{i< j\leq k< \infty}
\frac{\big(q^{\lambda^{(i)}_{k-i+1} - \lambda^{(j)}_{k-j+1}}ts_{j}/s_i;q\big)_{\lambda^{(i)}_{k-i}-\lambda^{(i)}_{k-i+1}}}{\big(q^{\lambda^{(i)}_{k-i+1} - \lambda^{(j)}_{k-j+1}}qs_{j}/s_i;q\big)_{\lambda^{(i)}_{k-i}-\lambda^{(i)}_{k-i+1}}} \nonumber
\\ \hphantom{C_{N,\infty}(\vlam|\bs|q,t) = }
{} \times \prod_{i=1}^N \prod_{i\leq j<k<\infty}
\frac{\big(q^{-\lambda^{(j)}_{k-j}+\lambda^{(i)}_{k-i+1}}qs_{j}/ts_i ;q\big)_{\lambda^{(i)}_{k-i}-\lambda^{(i)}_{k-i+1}}}{\big(q^{-\lambda^{(j)}_{k-j}+\lambda^{(i)}_{k-i+1}}s_{j}/s_i ;q\big)_{\lambda^{(i)}_{k-i}-\lambda^{(i)}_{k-i+1}}}\label{CNinfty}
\end{gather}
setting $\lambda^{(i+N)}_j \equiv \lambda^{(i)}_j$.
\end{Lemma}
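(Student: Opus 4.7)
The plan is to establish Lemma~\ref{lem1} as a pure reindexing identity: I would exhibit an explicit bijection between $\hat{\mathsf{M}}_N$ and $\mathsf{P}^N$ and check that both the monomial factor and the coefficient transform as claimed.

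First, I would set up the bijection. Given $\theta\in\hat{\mathsf{M}}_N$, define for each $i=1,\dots,N$ and $k\geq 1$
\begin{gather*}
\lambda^{(i)}_k \equiv \sum_{m\geq k}\theta_{i,i+m},
\end{gather*}
equivalently $\theta_{i,i+k}=\lambda^{(i)}_k-\lambda^{(i)}_{k+1}$. Non-negativity of the $\theta_{ik}$ forces the weak decrease $\lambda^{(i)}_1\geq\lambda^{(i)}_2\geq\cdots$, and the ``finite strip'' condition in~\eqref{hMMN} forces $\lambda^{(i)}_k=0$ for $k\gg 0$, so $\lambda^{(i)}\in\mathsf{P}$. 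Since $\theta\in\hat{\mathsf{M}}_N$ is fully determined by its first $N$ rows and all of those rows are used, the assignment $\theta\mapsto\vlam=(\lambda^{(1)},\dots,\lambda^{(N)})$ is a bijection onto $\mathsf{P}^N$. The $N$-periodic extension of $\vlam$ corresponds to $N$-periodicity of $\theta$.

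Next, I would verify matching of the monomial factors by a telescoping computation. Writing $\lambda^{(i)}_k=\sum_{m\geq k}\theta_{i,i+m}$ and interchanging the order of products yields
\begin{gather*}
\prod_{k\geq 1}(x_{i+k}/x_{i+k-1})^{\lambda^{(i)}_k}
=\prod_{m\geq 1}\prod_{k=1}^{m}(x_{i+k}/x_{i+k-1})^{\theta_{i,i+m}}
=\prod_{m\geq 1}(x_{i+m}/x_i)^{\theta_{i,i+m}},
\end{gather*}
and then renaming $k=i+m$ and taking the product over $i=1,\dots,N$ reproduces the factor $\prod_{i=1}^N\prod_{k>i}(x_k/x_i)^{\theta_{ik}}$ appearing in~\eqref{fNinfty}.

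Finally, I would match the coefficients $c_{N,\infty}$ and $C_{N,\infty}$ factor-by-factor. The key identities are the telescoping relations $\sum_{a>k}\theta_{ia}=\lambda^{(i)}_{k-i+1}$ (when $i<k$) and $\theta_{jk}+\sum_{a>k}\theta_{ja}=\lambda^{(j)}_{k-j}$, together with $\theta_{ik}=\lambda^{(i)}_{k-i}-\lambda^{(i)}_{k-i+1}$. Substituting these into~\eqref{cNinfty} turns the exponent $\sum_{a>k}(\theta_{ia}-\theta_{ja})$ into $\lambda^{(i)}_{k-i+1}-\lambda^{(j)}_{k-j+1}$ in the first product, the exponent $-\theta_{jk}-\sum_{a>k}(\theta_{ja}-\theta_{ia})$ into $-\lambda^{(j)}_{k-j}+\lambda^{(i)}_{k-i+1}$ in the second product, and the Pochhammer lengths $\theta_{ik}$ into $\lambda^{(i)}_{k-i}-\lambda^{(i)}_{k-i+1}$; this is precisely the expression~\eqref{CNinfty} for $C_{N,\infty}(\vlam|\bs|q,t)$.

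The proof is essentially combinatorial bookkeeping; no step is a real obstacle. The only mildly delicate point is making sure the ``finite strip away from the diagonal'' condition in~\eqref{hMMN} is exactly what encodes the requirement that each $\lambda^{(i)}$ has finite length, so that the bijection lands in $\mathsf{P}^N$ and not in some larger set of not-necessarily-terminating sequences. Once that is noted, the rest is the telescoping identities above.
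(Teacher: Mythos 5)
Your proposal is correct and is essentially the paper's own proof: the same bijection $\theta_{ik}=\lambda^{(i)}_{k-i}-\lambda^{(i)}_{k-i+1}$ (equivalently $\lambda^{(i)}_{k-i}=\sum_{a\geq k}\theta_{ia}$), the same telescoping for the monomial factor, and the same substitution of the partial sums into the exponents of~\eqref{cNinfty} to recover~\eqref{CNinfty}. Nothing further is needed.
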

\begin{proof}
Straightforward computations, using that $\theta_{ik} = \lambda^{(i)}_{k-i}-\lambda^{(i)}_{k-i+1}$ defines a~one-to-one correspondence between multi-partitions $\vlam=\big(\lambda^{(1)},\dots,\lambda^{(N)}\big)$ in $\mathsf{P}^N$ and matrices $\theta=(\theta_{ik})_{i,k=1}^\infty$ in $\hat{\mathsf{M}}_N$ (the interested reader can find the details in Appendix~\ref{app:prooflem1}).
\end{proof}
It is interesting to note that
\begin{gather*}
\varphi_{N,\infty}(\bx|\bs|q,t) \equiv \prod_{i=1}^N\prod_{j>i}\frac{(qx_j/tx_i;q)_\infty}{(qx_j/x_i;q)_\infty}\, f_{N,\infty}(\bx|\bs|q,t)
\end{gather*}
is a~natural generalization of~the function in~\eqref{phiN} due to the following implication of~Theorem~\ref{thm1}.

\begin{Fact}
\label{fact:Id}
The following holds,
\begin{gather*}
\varphi^{\widehat{\mathfrak{gl}}_N}\big(p^{\delta/N}x,p^{1/N}|\kappa^{\delta/N} s,\kappa^{1/N}|q,q/t\big) = \varphi_{N,\infty}(\bx|\bs|q,t)
\end{gather*}
with the variables $x=(x_i)_{i=1}^N$ and $s=(x_i)_{i=1}^N$ on the left-hand side extended to variables $\bx=(x_i)_{i=1}^\infty$ and $\bs=(s_i)_{i=1}^\infty$ on the right-hand side by the rules in~\eqref{period1}.
Moreover, the conjectures in~\eqref{symmetry1} are equivalent to
\begin{alignat}{3}
& \varphi_{N,\infty}(\bx|\bs|q,t)= \varphi_{N,\infty}(\bs|\bx|q,t) \quad && \text{(bispectral duality)},&\nonumber\\
& \varphi_{N,\infty}(\bx|\bs|q,t)= \varphi_{N,\infty}(\bx|\bs|q,q/t) \quad && \text{(Poincar\'e duality)},& \label{symmetry2}
\end{alignat}
under the conditions in~\eqref{period1}.
\end{Fact}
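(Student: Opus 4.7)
The plan is to treat the Fact in two stages: first establish the explicit identity between $\varphi^{\widehat{\mathfrak{gl}}_N}$ at the scaled arguments and $\varphi_{N,\infty}(\bx|\bs|q,t)$, and then deduce the equivalence of the symmetry conjectures \eqref{symmetry1} and \eqref{symmetry2}. The identity is the technical core; the equivalence follows formally from it.

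For the identity, Theorem~\ref{thm1} already matches the ``$f$''-parts on the two sides, so it suffices to verify that the two prefactors agree. I~would first simplify the prefactor of $\varphi^{\widehat{\mathfrak{gl}}_N}$ in~\eqref{phiNinfty} at the arguments $\big(p^{\delta/N}x, p^{1/N}|\kappa^{\delta/N}s, \kappa^{1/N}|q,q/t\big)$ by direct bookkeeping: the combination of $(p^{1/N})^{j-i}$ with $(p^{\delta/N}x)_j/(p^{\delta/N}x)_i$ has $p$-exponent $0$ in the first product, and the combination of $(p^{1/N})^{N-j+i}$ with $(p^{\delta/N}x)_i/(p^{\delta/N}x)_j$ contributes one extra factor of~$p$ in the second product; at the same time $p^N$ becomes~$p$, so each $(\,\cdot\,;q,p^N)_\infty$ collapses to an explicit $(\,\cdot\,;q,p)_\infty$. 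In parallel, I~would expand the prefactor of $\varphi_{N,\infty}(\bx|\bs|q,t)$ using the periodicity~$x_{i+N}=px_i$: for fixed $i\in\{1,\dots,N\}$ and $j>i$, parameterize $j=i'+mN$ with $i'\in\{1,\dots,N\}$ and $m\geq 0$ (with $i'>i$ when $m=0$), so that $x_j/x_i=p^m x_{i'}/x_i$. Collapsing the product over $m$ via $\prod_{m\geq 0}(zp^m;q)_\infty=(z;q,p)_\infty$ turns the $(q)_\infty$-product into a double $(q,p)_\infty$-product over $1\leq i,i'\leq N$, which splits into the cases $i<i'$ and $i'\leq i$; the latter carries an extra $p$-factor coming from the $m\geq 1$ shift. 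Matching this decomposition factor-by-factor with the simplified $\varphi^{\widehat{\mathfrak{gl}}_N}$ prefactor yields the identity.

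Given the identity, the equivalence of symmetries is formal. For bispectral duality: applying~\eqref{symmetry1} at the scaled arguments and invoking the identity on both sides (with $(x,p)\leftrightarrow(s,\kappa)$ on the right, which under~\eqref{period1} corresponds to $\bx\leftrightarrow\bs$) converts the conjecture into $\varphi_{N,\infty}(\bx|\bs|q,t)=\varphi_{N,\infty}(\bs|\bx|q,t)$; the implication reverses since the substitution from arbitrary arguments of $\varphi^{\widehat{\mathfrak{gl}}_N}$ to the scaled ones is invertible. For Poincar\'e duality: applying~\eqref{symmetry1} at the scaled arguments gives $\varphi^{\widehat{\mathfrak{gl}}_N}(\ldots|q,q/t)=\varphi^{\widehat{\mathfrak{gl}}_N}(\ldots|q,t)$, since $q/(q/t)=t$; by the identity the left-hand side equals $\varphi_{N,\infty}(\bx|\bs|q,t)$, while by the identity applied with $t$ replaced by $q/t$ throughout the right-hand side equals $\varphi_{N,\infty}(\bx|\bs|q,q/t)$, which is \eqref{symmetry2} Poincar\'e.

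The main obstacle I~anticipate is the bookkeeping in the prefactor comparison: one has to keep careful track of how the $p$-exponents from $p^{j-i}$, $p^{N-j+i}$ and the scaling $p^{\delta/N}$ conspire to cancel, and then verify that the split of the infinite product $\prod_{j>i}$ of $\varphi_{N,\infty}$ by residue class modulo~$N$ reproduces precisely the two finite products (over $i<j$ and over $i\leq j$) of~\eqref{phiNinfty}. Once that is in place, the rest is formal substitution.
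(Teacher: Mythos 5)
Your proposal is correct and follows essentially the paper's own route: Theorem~\ref{thm1} disposes of the $f$-factors, and your prefactor comparison (the $p$-power bookkeeping at the scaled arguments together with splitting $\prod_{j>i}$ into residue classes mod~$N$ and collapsing via $\prod_{m\geq 0}(zp^m;q)_\infty=(z;q,p)_\infty$) is precisely the content of the paper's Lemma~\ref{lem:Id2}, which is how the paper proves the Fact. Your explicit formal derivation of the equivalence of~\eqref{symmetry1} and~\eqref{symmetry2} is the same substitution argument the paper leaves implicit.
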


\begin{proof}Since $p^{(j-i)/N}\big(p^{\delta/N}x\big)_j/\big(p^{\delta/N}x\big)_i=x_j/x_i$ for all $i,j=1,\dots,N$, we only need to show that
\begin{gather*}
\prod_{i=1}^N \prod_{j>i} \frac{(qx_j/tx_i;q)_\infty}{(qx_j/x_i;q)_\infty} =
\prod_{1\leq i<j\leq N}\frac{(qx_j/tx_i;q,p)_\infty}{(qx_j/x_i;q,p)_\infty}
\prod_{1\leq i\leq j\leq N}\frac{(qpx_i/tx_j;q,p)_\infty}{(qpx_i/x_j;q,p)_\infty}.
\end{gather*}
This is proved in Appendix~\ref{app:Id}, Lemma~\ref{lem:Id2}.
\end{proof}.

\subsection{Convergence}\label{sec:convergence}

We~prove that the non-stationary Ruijsenaars functions $f_N\!(x,p|s,\kappa|q,t)$ in Definitions~\ref{def2} and~\ref{def:fNinfty} are absolutely convergent in a~certain domain of~variables and parameters.

\begin{Theorem}\label{thm:converegence}
For fixed $N\in\Z_{\geq 1}$, assume that the variables $s=(s_1,\dots,s_N)\in\C^N$ and the parameters $q$ and $\kappa$ satisfy the following conditions,
\begin{itemize}\itemsep=0pt
\item[$(i)$] for some $\sigma>0$,
\begin{gather*}
{}|\sin\arg(s_i/s_j)|>\sigma \qquad (1\leq i<j\leq N),
\end{gather*}
\item[$(ii)$] $q$ and $\kappa$ both are real, and either $|q|<1$ and $|\kappa|>1$, or $|q|>1$ and $|\kappa|<1$.
\end{itemize}
Then, there exists a~constant $\rho>0$ such that the formal power series
\begin{gather*}
f_{N,\infty}(x,p;s,\kappa|q,t)\in \C[[x_2/x_1,\dots,x_N/x_{N-1},px_1/x_N]]
\end{gather*}
in Definitions~{\rm \ref{def2}} and~{\rm \ref{def:fNinfty}} is absolutely convergent in the domain
\begin{gather}\label{xcond}
|p|<\rho^N,\qquad |x_2/x_1|<\rho,\qquad \dots,\qquad |x_N/x_{N-1}|<\rho,\qquad |px_1/x_N|<\rho.
\end{gather}
\end{Theorem}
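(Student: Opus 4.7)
The plan is to use the representation of $f_{N,\infty}(x,p|s,\kappa|q,t)$ given by Lemma~\ref{lem1}, which expresses it as a sum over $N$-tuples of partitions $\vlam=\big(\lambda^{(1)},\ldots,\lambda^{(N)}\big)\in\mathsf{P}^N$. Introducing $y_j=x_{j+1}/x_j$ for $j=1,\ldots,N-1$ and $y_N=px_1/x_N$, the periodicity $x_{i+N}=px_i$ identifies each factor $x_{i+k}/x_{i+k-1}$ appearing in the monomial in~\eqref{fNinfty1} with one of the $y_j$; hence, in the domain~\eqref{xcond} (where $|p|<\rho^N$ is automatic from $|y_j|<\rho$), the monomial factor is bounded by $\rho^{|\vlam|}$, with $|\vlam|\equiv\sum_i|\lambda^{(i)}|$. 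The proof thus reduces to establishing a uniform exponential bound
\begin{equation*}
|C_{N,\infty}(\vlam|\bs|q,t)|\leq M\cdot K^{|\vlam|}
\end{equation*}
with constants $M,K>0$ independent of $\vlam$; absolute convergence in~\eqref{xcond} then follows for any $\rho<1/K$, since the number of $N$-tuples of partitions of total size $n$ grows only subexponentially in $n$.

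To establish this bound, I would analyse $C_{N,\infty}(\vlam|\bs|q,t)$ box-by-box. Each of the infinite products in~\eqref{CNinfty} contains only finitely many non-trivial factors for any fixed $\vlam$ (since $\lambda^{(i)}$ has finite length), and the total length of the $q$-Pochhammers involved is $O(|\vlam|)$ by the telescoping identity $\sum_{k\geq i+1}\big(\lambda^{(i)}_{k-i}-\lambda^{(i)}_{k-i+1}\big)=\lambda^{(i)}_1$. For $|q|<1$, the standard identity $(a;q)_n=(a;q)_\infty/(aq^n;q)_\infty$ shows that $|(a;q)_n|$ is comparable to $|(a;q)_\infty|$ uniformly in $n$, with constants depending only on bounds for $|a|$; this gives uniform upper bounds on the numerator Pochhammers and, provided the arguments stay away from the poles $q^{-k}$, uniform lower bounds on the denominator Pochhammers.

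Condition~(i) ensures $|\sin\arg(s_j/s_i)|>\sigma$ whenever $i\not\equiv j\pmod{N}$; since $q$ and $\kappa$ are real by condition~(ii), multiplying by the real factor $q^m\kappa^\ell$ preserves the lower bound on $|\sin\arg|$, giving $|1-q^m s_j/s_i|\geq\sigma\max(1,|q^m s_j/s_i|)$ for all integer shifts $m$. For pairs with $i\equiv j\pmod{N}$, $s_j/s_i=\kappa^\ell$ is real but $|\kappa|\neq 1$ keeps $|\kappa|^\ell$ bounded away from $1$ for $\ell\geq 1$; here one controls the ratio of numerator and denominator Pochhammers (whose arguments differ only by a factor $t/q$) geometrically in the Pochhammer length $\theta=\lambda^{(i)}_{k-i}-\lambda^{(i)}_{k-i+1}$. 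Summing these geometric factors by the telescoping identity yields a contribution of order $K_0^{|\vlam|}$ per residue pair, for some explicit $K_0$, and multiplying over the finitely many residue pairs produces the desired bound.

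The main obstacle is the tail of the infinite product in~\eqref{CNinfty}: for each residue pair $(i,j_0)$ the product runs over all $j=j_0+\ell N$, $\ell\geq 1$, and the arguments of the Pochhammers blow up geometrically in $\kappa^\ell$. One must show that, after cancellations between numerator and denominator, this infinite product over $\ell$ converges absolutely and is uniformly bounded in $\vlam$. This is where conditions~(i) and (ii) become essential: they rule out resonances of the form $q^m\kappa^\ell s_{j_0}/s_i=1$ in the tail and, together with the geometric decay of the Pochhammer ratios as $\ell\to\infty$, guarantee convergence of the infinite product with a bound independent of $\vlam$. Combining all the box-by-box estimates with this tail control yields $|C_{N,\infty}(\vlam|\bs|q,t)|\leq M\cdot K^{|\vlam|}$, which completes the proof.
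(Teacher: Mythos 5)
Your overall skeleton coincides with the paper's: you use the representation \eqref{fNinfty1}--\eqref{CNinfty}, bound the monomial by $\rho^{|\vlam|}$, aim for a bound $|C_{N,\infty}(\vlam|\bs|q,t)|\leq M K^{|\vlam|}$, and conclude by comparison with a convergent series over $N$-tuples of partitions. However, the central estimate is not actually established, and the justifications you offer for the hardest points do not work as stated. First, bounding numerator and denominator Pochhammers \emph{separately} cannot give a bound of the form $K^{|\vlam|}$: the arguments in \eqref{CNinfty} are $q^{l}\,c\,s_j/s_i$ with $l$ a difference of parts of $\vlam$, hence unbounded over the sum, so "constants depending only on bounds for $|a|$" are not uniform in $\vlam$, and an individual factor $(q^{-m}c;q)_\theta$ can be super-exponentially large in $|\vlam|$. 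The cancellation must be exploited factor by factor: one pairs each numerator factor $1-q^{l+n}a u$ with the denominator factor $1-q^{l+n}u$ and bounds the ratio by a constant ($C_1$ or $C_2$ in \eqref{C1C2}), after which the telescoping sum of the Pochhammer lengths gives exactly the exponent $|\vlam|$ as in \eqref{Cbound}. Your inequality $|1-q^m s_j/s_i|\geq\sigma\max(1,|q^m s_j/s_i|)$ is indeed strong enough for this pairing in the non-resonant case, but that is not the route you describe for that case.

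Second, and more seriously, for the resonant pairs $j-i\in N\Z_{\geq 0}$ you assert that conditions $(i)$ and $(ii)$ "rule out resonances of the form $q^m\kappa^\ell s_j/s_i=1$". They do not: for example $q=1/2$, $\kappa=2$ satisfy $(ii)$, yet $q^m\kappa^m=1$ for every $m$. What saves the argument is a sign analysis of the exponents that actually occur: because the $\lambda^{(i)}$ are weakly decreasing and $\lambda^{(i+N)}=\lambda^{(i)}$, the relevant exponents are of the form $-\theta-m+1$ (first group, cf.~\eqref{frac1} and \eqref{estb}) or $-\theta-m$ (second group, cf.~\eqref{frac2} and \eqref{estc}) with $m\geq 0$, so the arguments $q^{-n-m}\kappa^\ell$ have modulus strictly and uniformly bounded away from $1$ under $(ii)$; this verification is the crux and is missing from your proposal. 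Relatedly, your statement that "$|\kappa|\neq 1$ keeps $|\kappa|^\ell$ bounded away from $1$ for $\ell\geq 1$" does not cover the diagonal terms $j=i$ (i.e.\ $\ell=0$) present in the second group of products, where the protection comes from strictly negative powers of $q$ and $|q|\neq 1$, not from $\kappa$. Finally, the "main obstacle" you identify---convergence of the infinite product over $\ell$---is spurious: for fixed $\vlam$ all but finitely many Pochhammer factors have length $\theta=\lambda^{(i)}_{k-i}-\lambda^{(i)}_{k-i+1}=0$ and equal $1$ exactly, so there is no tail to control; the genuine issue is uniformity of the bound in $\vlam$, which is what the pairing plus telescoping argument delivers.
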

\begin{Remark}In our proof, we actually show convergence for any $\rho<1/C_1C_2$ where
\begin{gather}
C_1= 1+|1-t/q|\max\left(\frac1{\sigma}, \frac{|\kappa|}{|1-|\kappa||}\right)\!,\nonumber\\
C_2= 1+|1-q/t|\max\left(\frac1{\sigma},\frac{1}{|1-|q||}\right)\!.\label{C1C2}
\end{gather}
\end{Remark}
\begin{Remark}
We~believe that it~is possible to refine this convergence result. In particular, we~believe that there are regions of~convergence where $s_i/s_j$, $1\leq i<j\leq N$, are real and $q$ and~$\kappa$ have non-trivial imaginary parts.
\end{Remark}

\begin{proof}[Proof of~Theorem~\ref{thm:converegence}]
Our strategy of~proof is to show that our assumptions imply simple upper bounds on the terms appearing in the series in~\eqref{fNinfty1}--\eqref{CNinfty}:
\begin{gather}
\label{estim}
\Bigg|\prod_{i=1}^N\prod_{k\geq 1}(x_{i+k}/x_{i+k-1})^{\lambda^{(i)}_k}\Bigg|\leq \rho^{|\vlam|},\qquad \left|C_{N,\infty}(\vlam|\bs|q,t)\right|\leq C_1^{|\vlam|}C_2^{|\vlam|}
\end{gather}
with $|\vlam|\equiv \sum_{i=1}^N \sum_{k\geq 1}\lambda^{(i)}_k$ and $\alpha = \rho C_1C_2<1$.
With that, absolute convergence follows from the comparison test: the series in~\eqref{fNinfty1}--\eqref{CNinfty} is of~the form $\sum_{\vlam\in\mathsf{P}^N} a_{\vlam}$ with $|a_{\vlam}|\leq \alpha^{|\vlam|}$ for all $\vlam\in\mathsf{P}^N$,
and the series $\sum_{\vlam\in\mathsf{P}^N} \alpha^{|\vlam|}$ converges absolutely for~$|\alpha|<1$.

The first estimate in~\eqref{estim} is a~simple consequence of~the conditions in~\eqref{xcond}: since \allowbreak \mbox{$x_{i+N}=px_i$} for all $i\geq 1$, these conditions are equivalent to
\begin{gather*}
|x_{i+1}/x_i| < \rho\qquad (i\geq 1) ,
\end{gather*}
which clearly implies the result.

The proof of~the second estimate in~\eqref{estim} is more involved and, for this reason, we supplement our somewhat descriptive arguments in the main text below by a~detailed argument in~Appendix~\ref{app:estim}.

We~observe that $C_{N,\infty}(\vlam|\bs|q,t)$ in~\eqref{CNinfty} is a~product of~fractions $\big(1-q^{l}au\big)/\big(1-q^l u\big)$ with $a=t/q$ in the first group of~products and $a=q/t$ in the second group,
 $l\in\Z$, and $u=s_j/s_i$ for~$i=1,\dots,N$ and $j\geq i$; moreover, $s_{j+\ell N}=\kappa^\ell s_j$ for~$\ell\in\Z_{\geq 1}$.
Such a~fraction can be estimated in a~simple way:
\begin{gather*}
\left|\frac{1-q^{l}au}{1-q^l u}\right| = \left|1+(1-a)\frac{q^l u}{1-q^l u} \right|\leq 1+|1-a|\left|\frac{q^l u}{1-q^l u}\right|\!.
\end{gather*}
If $j-i$ is {\em not} an integer multiple of~$N$, we can estimate this further using
\begin{gather}
\label{zest}
\left| \frac{z}{1-z}\right| \leq \frac1{|\sin\arg(z)|} \qquad (z\in\C\setminus\{\R\})
\end{gather}
(to see that the latter inequality holds, write $z=|z|\ee^{\ii\varphi}$ and note that~\eqref{zest} is equivalent to
\begin{gather*}
|z|^2\sin^2\varphi \leq 1+|z|^2-2|z|\cos\varphi\Leftrightarrow 0\leq (1-|z|\cos\varphi)^2,
\end{gather*}
which is obvious). Since we assume that $q$ and $\kappa$ both are real,
\begin{gather*}
\big|\sin\arg\big(q^l \kappa^{\ell }s_j/s_i\big)\big| = |\sin\arg(s_j/s_i)| \geq \sigma>0\qquad (j-i\neq N\Z_{\geq 0})
\end{gather*}
for all integers $l,\ell$, we get a~simple universal bound for these fractions:
\begin{gather*}
\left|\frac{1-q^l a s_j/s_i}{1-q^l s_j/s_i}\right| \leq 1+|1-a|\frac{1}{\sigma} \qquad (j-i \notin N\Z_{\geq 0})
\end{gather*}
for all integers $l$. However, this bound does not work for~$j=i+\ell N$ with $\ell\in\Z_{\geq 0}$ since, in these cases, $q^l u=q^l s_j/s_i=q^l \kappa^\ell$ is real.
However, one can check that, in all these latter cases, either $l\leq 0$ and $\ell>0$, or $l<0$ and $\ell\geq 0$, and thus, by our assumptions, $z\equiv q^lu=q^l \kappa^\ell$ {\em always}
satisfies either $|z|\geq \min\big(|q|^{-1},|\kappa|\big)>1$ (if~$|q|<1$ and $|\kappa|>1$) or $|z|\leq \max\big(|q|,|\kappa|^{-1}\big)<1$ (if~$|q|>1$ and $|\kappa|<1$); we therefore can use the inequality
\begin{gather*}
\left| \frac{z}{1-z}\right| \leq \frac{|z|}{|1-|z||} \qquad (|z|\neq 1)
\end{gather*}
to get simple universal bounds for the cases $j=i+\ell N$ with $\ell\in\Z_{\geq 0}$ as well (we spell our the details of~this argument in Appendix~\ref{app:estb}).
We~thus get estimates
\begin{gather*}
\left|\frac{1-q^{l}au}{1-q^l u}\right| \leq C_{1,2}
\end{gather*}
with different upper bounds, $C_1$ and $C_2$, for {\em all} fractions in the first and second groups of~pro\-ducts on the right-hand side in~\eqref{CNinfty}, respectively.
The arguments above allow to compute the constants $C_1$ and $C_2$ and give the results in~\eqref{C1C2}; the interested reader can find the details of~this computation in Appendix~\ref{app:estim}.

Inserting these bounds into~\eqref{CNinfty} we obtain
\begin{gather}
|C_{N,\infty}(\vlam|\bs|q,t)| \leq
\prod_{i=1}^N \Bigg( \prod_{i< j\leq k< \infty} C_1^{\lambda^{(i)}_{k-i}-\lambda^{(i)}_{k-i+1}}\Bigg)\Bigg( \prod_{i\leq j< k< \infty} C_2^{\lambda^{(i)}_{k-i}-\lambda^{(i)}_{k-i+1}}\Bigg) \nonumber
\\ \hphantom{|C_{N,\infty}(\vlam|\bs|q,t)|}
{}= \prod_{i=1}^N \Bigg( \prod_{i< j< \infty} C_1^{\lambda^{(i)}_{j-i}}\Bigg)\Bigg( \prod_{i\leq j< \infty} C_2^{\lambda^{(i)}_{j+1-i}}\Bigg)\nonumber
\\ \hphantom{|C_{N,\infty}(\vlam|\bs|q,t)|}
{}= \prod_{i=1}^N \Bigg( \prod_{k\geq 1} C_1^{\lambda^{(i)}_{k}}\Bigg) \Bigg( \prod_{k\geq 1} C_2^{\lambda^{(i)}_{k}}\Bigg) = C_1^{|\vlam|}C_2^{|\vlam|},
\label{Cbound}
\end{gather}
computing telescoping products in the second step and using $\sum_{i=1}^N\sum_{k\geq 1}\lambda^{(i)}_k=|\vlam|$ in the last step.
This proves the second estimate in~\eqref{estim}.

To conclude, we prove that the series $\sum_{\vlam\in\mathsf{P}^N} \alpha^{|\vlam|}$ for~$|\alpha|<1$ is absolutely convergent by the following computation,
\begin{gather*}
\sum_{\vlam\in\mathsf{P}^N} \alpha^{|\vlam|} = \sum_{\lambda^{(1)},\dots,\lambda^{(N)}\in\mathsf{P}} \prod_{i=1}^N \alpha^{|\lambda^{(i)}|} =
 \prod_{i=1}^N \sum_{\lambda^{(i)}\in\mathsf{P}} \alpha^{|\lambda^{(i)}|} =\left( \sum_{\lambda\in\mathsf{P}} \alpha^{|\lambda|}\right)^N = \frac1{(\alpha;\alpha)_\infty^N},
\end{gather*}
using the definition $|\lambda|\equiv \sum_{k\geq 1}\lambda_k$ for partitions $\lambda$; for clarity, and for the convenience of~the reader, we give in Appendix~\ref{app:Id} the well-known identity used in the last step, together with its elementary proof making absolute convergence manifest; see~\eqref{Euler}--\eqref{Euler1}.
\end{proof}

\section[T-operators]{$\cT$-operators}\label{sec:cT}
For fixed $N\in\Z_{\geq 1}$, we define an operator $\cT$ which acts diagonally on the asymptotically free solution of~the trigonometric Ruijsenaars model (Section~\ref{sec:cTtrig}). We~also present a~natural non-stationary generalization of~this operator which, as we conjecture, acts diagonally on the corresponding non-stationary Ruijsenaars function (Section~\ref{sec:cTell}).

\subsection{Trigonometric case}\label{sec:cTtrig}
We~find it~convenient to work with formal power series.

\begin{Definition} For
\begin{gather}
\label{Delta}
\Delta \equiv \sum_{i=1}^N \left(x_i\partial_{x_i}+(N-i)\beta \right)^2
\end{gather}
with $\beta = \log(t)/\log(q)$, let
\begin{gather}
\label{cTtrig}
\cT_N(x|q,t) \equiv \sum_{\theta\in\mathsf{M}_N} \prod_{1\leq i<j\leq N}(x_j/x_i)^{\theta_{ij}} q^{\frac12\Delta}c_N(\theta|x|q,t)\prod_{1\leq i<j\leq N}\frac{(x_j/x_i;q)_\infty}{(tx_j/x_i;q)_\infty}
\end{gather}
on $x^\lambda \C[[x_2/x_1,\dots,x_N/x_{N-1}]]$ for~$\lambda\in\C^N$, with $\mathsf{M}_N$ in~\eqref{MMN} and $c_N(\theta|s|q,t)$ in~\eqref{cN}.
\end{Definition}

Clearly, the operator $\cT_N(x|q,t)$ is complicated: it~has the same complexity as the function $f_N(x|s|q,t)$; cf.~\eqref{fN}.
Still, it~is interesting since, different from the elliptic Macdonald--Ruijsenaars operators in~\eqref{R}, we know its natural generalization to the non-stationary case; see Section~\ref{sec:cTell}.

The following is our main result in this section.

\begin{Proposition}\label{prop:cT}
The $\cT$-operator in~\eqref{cTtrig} is well-defined, it~commutes with the trigonometric Macdonald--Ruijsenaars operators in~\eqref{MR}:
\begin{gather}\label{commutativity}
\big[\cT_N(x|q,t),D_N^\pm(x|q,t)\big] =0
\end{gather}
on $x^\lambda \C[[x_2/x_1,\dots,x_N/x_{N-1}]]$ for all $\lambda\in\C^N$, and it~acts diagonally on the asymptotically free solutions of~the trigonometric Ruijsenaars model in~\eqref{fN}--\eqref{cN}:
\begin{gather}
\cT_N(x|q,t) x^\lambda f_N(x|s|q,t)=\varepsilon_N(s|q) x^\lambda f_N(x|s|q,t),\qquad s_i=t^{N-i}q^{\lambda_i},
\\
\label{vepsN11}
\varepsilon_N(s|q) = q^{\frac12\sum_{i=1}^N[\log(s_i)/\log(q)]^2}
\end{gather}
$($note that $\log(s_i)/\log(q)=\lambda_i+\beta(N-i))$.
\end{Proposition}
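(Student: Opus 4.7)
My plan is to split Proposition~\ref{prop:cT} into three claims --- well-definedness of $\cT_N(x|q,t)$, the eigenvalue identity \eqref{vepsN11}, and commutativity \eqref{commutativity} --- and to treat them separately, verifying the first two by direct computation and deducing the third from the eigenvalue identity via a triangular-basis argument.

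For well-definedness I would inspect each factor in \eqref{cTtrig} in turn. The finite product $\prod_{1\leq i<j\leq N}(x_j/x_i;q)_\infty/(tx_j/x_i;q)_\infty$ is itself an element of $\C[[x_2/x_1,\dots,x_N/x_{N-1}]]$; the rational coefficient $c_N(\theta|x|q,t)$ likewise admits a Taylor expansion in the same variables; the operator $q^{\frac12\Delta}$ acts by scalar on each monomial; and the outer multiplication by $\prod_{i<j}(x_j/x_i)^{\theta_{ij}}$ strictly raises the total $(x_{i+1}/x_i)$-degree (by $\sum_{i<j}(j-i)\theta_{ij}>0$) whenever $\theta\neq 0$. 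Hence each coefficient of the output receives only finitely many contributions from the sum over $\theta\in\mathsf{M}_N$, so $\cT_N(x|q,t)$ is well-defined in the formal power series topology.

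For the eigenvalue identity I would substitute the expansion \eqref{fN}--\eqref{cN},
$x^\lambda f_N(x|s|q,t) = \sum_{\eta\in\mathsf{M}_N} c_N(\eta|s|q,t)\,x^{\lambda+\mu(\eta)}$ with $\mu(\eta)_j := \sum_{i<j}\eta_{ij}-\sum_{k>j}\eta_{jk}$, into the action of $\cT_N$. With $s_j = t^{N-j}q^{\lambda_j}$ --- equivalently $\lambda_j+(N-j)\beta = \log s_j/\log q$ --- a direct computation yields
\[q^{\frac12\Delta}x^{\lambda+\mu(\eta)} = \varepsilon_N(s|q)\cdot\prod_{i<k}(s_k/s_i)^{\eta_{ik}}\cdot q^{\frac12\sum_j\mu(\eta)_j^2}\cdot x^{\lambda+\mu(\eta)},\]
which isolates the desired scalar $\varepsilon_N(s|q)$. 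What remains is a formal power series identity in which the kernel $\sum_{\theta\in\mathsf{M}_N}\prod_{i<j}(x_j/x_i)^{\theta_{ij}}\,c_N(\theta|x|q,t)\,\prod_{i<j}(x_j/x_i;q)_\infty/(tx_j/x_i;q)_\infty$ must absorb the quadratic correction $q^{\frac12\sum_j\mu(\eta)_j^2}$ when paired against $c_N(\eta|s|q,t)\prod(s_k/s_i)^{\eta_{ik}}$ so as to reproduce $x^\lambda f_N(x|s|q,t)$. I expect this kernel identity to follow from bispectral duality \eqref{symmetry} of $\varphi_N$ combined with Pieri-type recursions for the coefficients $c_N$ from \cite{NS, S05}; verifying it rigorously is the main obstacle of the proof.

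Once the eigenvalue identity is established, commutativity \eqref{commutativity} follows by a triangular-basis argument. For any $\lambda\in\C^N$ and any tuple $\nu = (-n_2,\,n_2-n_3,\dots,n_{N-1}-n_N,\,n_N)$ with $n_k\in\Z_{\geq 0}$, set $s^{(\nu)}_i := t^{N-i}q^{\lambda_i+\nu_i}$. Then $x^{\lambda+\nu}f_N\big(x|s^{(\nu)}|q,t\big)$ lies in $V_\lambda := x^\lambda\C[[x_2/x_1,\dots,x_N/x_{N-1}]]$, starts with leading monomial $x^{\lambda+\nu}$, and is a simultaneous eigenvector of $D_N^\pm(x|q,t)$ and $\cT_N(x|q,t)$ with eigenvalues $\sum_j \big(s^{(\nu)}_j\big)^{\pm 1}$ and $\varepsilon_N\big(s^{(\nu)}|q\big)$ respectively. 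As $\nu$ varies, the leading monomials $x^{\lambda+\nu}$ exhaust the monomial basis of $V_\lambda$, so the resulting eigenvectors form a triangular change of basis; both operators are therefore simultaneously diagonal in this basis, giving $[\cT_N,D_N^\pm]=0$ on $V_\lambda$ for every $\lambda\in\C^N$.
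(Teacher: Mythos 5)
Your outline inverts the paper's logic, and the step you yourself flag as ``the main obstacle'' is exactly the heart of the Proposition, so as it stands the proposal has a genuine gap. Your reduction of the eigenvalue identity is fine as far as it goes: the computation $q^{\frac12\Delta}x^{\lambda+\mu(\eta)}=\varepsilon_N(s|q)\prod_{i<k}(s_k/s_i)^{\eta_{ik}}q^{\frac12\sum_j\mu(\eta)_j^2}x^{\lambda+\mu(\eta)}$ is correct. But the resulting ``kernel identity'' --- that the kernel $\sum_{\theta}\prod_{i<j}(x_j/x_i)^{\theta_{ij}}c_N(\theta|x|q,t)\prod_{i<j}(x_j/x_i;q)_\infty/(tx_j/x_i;q)_\infty$ absorbs the quadratic factor $q^{\frac12\sum_j\mu^2_j}$ and reproduces $x^\lambda f_N$ --- is precisely the nontrivial content of the statement, and ``I expect this to follow from bispectral duality combined with Pieri-type recursions'' is not an argument; it is not at all clear how to extract it from those ingredients, and you give no indication of how the quadratic Gaussian weight would be handled by Pieri relations. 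Since your commutativity argument is then built entirely on this unproven eigenvalue identity (you diagonalize both operators in the triangular family $x^{\lambda+\nu}f_N(x|s^{(\nu)}|q,t)$), the whole chain collapses at that point. Your triangular-basis deduction of $[\cT_N,D_N^\pm]=0$ would otherwise be acceptable, modulo a remark that both operators are continuous in the formal power-series topology so that diagonal action on a topological basis implies commutativity.

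The paper proceeds in the opposite order, and this is what makes the proof close. It first rewrites the Gaussian factor via a constant-term representation with Jacobi theta functions (Lemma~\ref{lem:cTtrig}), so that $\cT_N$ acts through the kernel $\prod_i\vartheta_3(s_ix_i/y_i|q)\,\chi_N(x|y|q,t)\prod_{i<j}(1-y_j/y_i)$ with $\chi_N$ as in~\eqref{psiN}. Commutativity~\eqref{commutativity} is then proved \emph{directly} from operator identities ((C.8)--(C.10) in Appendix~\ref{app:cT}) together with the two eigenvalue equations $E^\pm_N(x|y|q,t)\chi_N=e_1(y^{\pm1})\chi_N$ and $E^\pm_N(y|x|q,q/t)\chi_N=e_1(x^{\pm1})\chi_N$, which rest on the \emph{proven} trigonometric duality~\eqref{symmetry} of $\varphi_N$ from~\cite{NS} (note the elliptic duality~\eqref{symmetry1} is only conjectural, so any argument must use the trigonometric one). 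Only then does the eigenfunction property follow, from commutativity plus uniqueness of the asymptotically free eigenfunctions of $D_N^\pm$, and the eigenvalue $\varepsilon_N(s|q)$ is read off from the triangularity of $\cT_N$ --- essentially the same leading-term observation you make. If you want to salvage your route, you would have to actually prove your kernel identity, which in effect amounts to redoing the paper's constant-term and duality analysis; the cleaner fix is to prove commutativity first, as the paper does, and let the eigenvalue statement follow rather than precede it.
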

(A proof based on results in the rest of~this section can be found in Appendix~\ref{app:cT}.)

Our proof of~Proposition~\ref{prop:cT} is based on the following convenient representation of~the \mbox{$\cT$-operator}.
\begin{Lemma}
\label{lem:cTtrig}
For $f(x)\in \C[[x_2/x_1,\dots,x_N/x_{N-1}]]$ and $\lambda\in\C^N$,
\begin{gather}
\label{cTtrig1}
\cT_N(x|q,t) x^\lambda f(x) = \varepsilon(\lambda)x^\lambda\Bigg[\prod_{i=1}^N\vartheta_3(s_ix_i/y_i|q) \chi_N(x|y|q,t)\!\!\prod_{1\leq i<j\leq N}\!\!(1-y_j/y_i)f(y)\Bigg]_{1,y}
\end{gather}
with
\begin{gather*}
\varepsilon(\lambda)= q^{\frac12\sum_{i^1}^N(\lambda_i+(N-i)\beta)^2},
\end{gather*}
$\vartheta_3(z|q)\equiv \sum_{n\in\Z}q^{\frac12 n^2}z^n$ the third Jacobi theta function,
\begin{gather}
\label{psiN}
\chi_N(x|y|q,t)\equiv f_N(x|y|q,t)\prod_{1\leq i<j\leq N}\frac{(qy_j/y_i;q)_\infty}{(ty_j/y_i;q)_\infty} ,
\end{gather}
and $[\cdots]_{1,y}$ is the constant term in $y$, i.e., for formal Laurent series $g(y)=\sum_{\mu\in\Z^N}g_\mu y^\mu$ as in~\eqref{cTtrig1}, $[g(y)]_{1,y}=g_0$.
\end{Lemma}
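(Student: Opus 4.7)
The approach is to unfold the definition of $\cT_N(x|q,t)$ layer by layer, apply it to $x^\lambda f(x)$, and then sum over $\theta\in\mathsf{M}_N$ to recognise the defining series of $f_N$. The operator in \eqref{cTtrig} is the composition of multiplication by $\prod_{i<j}(x_j/x_i)^{\theta_{ij}}$, the Gaussian operator $q^{\frac12\Delta}$, and multiplication by $c_N(\theta|x|q,t)\prod_{i<j}(x_j/x_i;q)_\infty/(tx_j/x_i;q)_\infty$; the two multiplication layers are transparent once the middle one is understood.

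The crux of the calculation is the action of $q^{\frac12\Delta}$ on $x^\lambda F(x)$ for an arbitrary $F\in\C[[x_2/x_1,\dots,x_N/x_{N-1}]]$. Since $\Delta x^\mu=\bigl(\sum_i(\mu_i+(N-i)\beta)^2\bigr)x^\mu$, expanding $F(x)=\sum_\nu F_\nu x^\nu$ and writing $\mu=\lambda+\nu$ splits the quadratic exponent into a $\nu$-independent piece that produces $\varepsilon(\lambda)$, a cross term $\sum_i(\lambda_i+(N-i)\beta)\nu_i$, and a pure-$\nu$ piece $\tfrac12\sum_i\nu_i^2$. Using $q^\beta=t$ and $s_i=t^{N-i}q^{\lambda_i}$ to rewrite the cross term as $s^\nu$ gives
\begin{gather*}
q^{\frac12\Delta}\bigl(x^\lambda F(x)\bigr)=\varepsilon(\lambda)\,x^\lambda\sum_\nu F_\nu q^{\frac12\sum_i\nu_i^2}(sx)^\nu=\varepsilon(\lambda)\,x^\lambda\Bigl[F(z)\prod_{i=1}^N\vartheta_3(s_ix_i/z_i|q)\Bigr]_{1,z},
\end{gather*}
the second equality being immediate from $\vartheta_3(u|q)=\sum_n q^{n^2/2}u^n$ followed by constant-term extraction in $z$. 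This theta-function representation of the Gaussian twist is the key identity of the proof.

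The remainder is bookkeeping. I apply the above with $F(x)=c_N(\theta|x|q,t)\prod_{i<j}(x_j/x_i;q)_\infty/(tx_j/x_i;q)_\infty\,f(x)$, carry the left multiplication $\prod_{i<j}(x_j/x_i)^{\theta_{ij}}$ inside the $z$-constant term (it is $x$-dependent and so commutes with $[\,\cdot\,]_{1,z}$), and sum over $\theta$. The $\theta$-sum is exactly $f_N(x|z|q,t)$ by the definition \eqref{fN}--\eqref{cN}. Using $(z_j/z_i;q)_\infty=(1-z_j/z_i)(qz_j/z_i;q)_\infty$ and the definition \eqref{psiN} of $\chi_N$ then regroups the remaining prefactors as $\chi_N(x|z|q,t)\prod_{i<j}(1-z_j/z_i)$, and renaming $z\to y$ produces \eqref{cTtrig1}.

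The main technical point will be the formal well-definedness of the right-hand side: each $\vartheta_3(s_ix_i/y_i|q)$ contributes arbitrarily high negative powers of $y_i$, so one must verify that extracting the coefficient of $y^0$ in $\prod_i\vartheta_3(s_ix_i/y_i|q)\chi_N(x|y|q,t)\prod_{i<j}(1-y_j/y_i)f(y)$ is a finite operation order by order in the $x$-ratios. The essential observation is that $\chi_N(x|y|q,t)\prod_{i<j}(1-y_j/y_i)f(y)$, expanded as a formal series in the consecutive ratios $y_{i+1}/y_i$, produces only $y$-monomials $y^\mu$ with $\sum_i\mu_i=0$, each such monomial being matched by exactly one tuple in the theta expansion; the resulting constant term is then a well-defined element of $\C[[x_2/x_1,\dots,x_N/x_{N-1}]]$.
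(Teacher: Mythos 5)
Your proposal is correct and follows essentially the same route as the paper's proof: both hinge on the identity $q^{\frac12\Delta}\big(x^\lambda F(x)\big)=\varepsilon(\lambda)\,x^\lambda\big[\prod_{i=1}^N\vartheta_3(s_ix_i/y_i|q)\,F(y)\big]_{1,y}$, obtained by expanding $F$ in monomials and completing the square with $s_i=t^{N-i}q^{\lambda_i}$, after which summing over $\theta$ reproduces $f_N(x|y|q,t)$ and the factor $(y_j/y_i;q)_\infty=(1-y_j/y_i)(qy_j/y_i;q)_\infty$ regroups everything into $\chi_N$ times $\prod_{i<j}(1-y_j/y_i)$. Your closing remark on well-definedness of the constant term corresponds to the paper's Remark~\ref{rem:Id} preceding the proof (restricting $[\cdots]_{1,y}$ to $\prod_i\vartheta_3(s_ix_i/y_i|q)\,\C[[y_2/y_1,\dots,y_N/y_{N-1}]]$), so nothing is missing.
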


\begin{Remark}
We~use $[\cdots]_{1,y}$ only for~$g(y)\in \prod_{i=1}^N\vartheta_3(s_ix_i/y_i|q) \C[[y_2/y_1,\dots,y_N/y_{N-1}]]$,
and our definition of~$[\cdots]_{1,y}$ is non-ambiguous for these.
\end{Remark}

\begin{proof}[Proof of~Lemma~\ref{lem:cTtrig}] We~use that $\C[[x_2/x_1,\dots,x_N/x_{N-1}]]$ is spanned by (a subset of) monomials $x^\mu$ with $\mu\in\Z^N$.
For fixed $\lambda\in\C^N$, we compute the action of~$q^{\frac12\Delta}$ on $x^\lambda x^{\mu}$, $\mu\in\Z^N$:
\begin{gather*}
q^{\frac12\Delta} x^\lambda x^{\mu} = q^{\frac12\sum_{i=1}^N(\lambda_i+\mu_i+(N-i)\beta)^2}x^{\lambda +\mu} =\varepsilon(\lambda) x^\lambda \prod_{i=1}^N (x_is_i)^{\mu_i} q^{\frac12\sum_{i=1}^N\mu_i^2}
\end{gather*}
equal to
\begin{gather*}
 \varepsilon(\lambda) x^\lambda\left[\,\prod_{i=1}^N \vartheta_3(s_ix_i/y_i|q) y^\mu \right]_{1,y}\! ,
\end{gather*}
and thus
\begin{gather*}
q^{\frac12\Delta} x^\lambda f(x) = \varepsilon(\lambda) x^\lambda\left[\,\prod_{i=1}^N \vartheta_3(s_ix_i/y_i|q) f(y) \right]_{1,y}
\end{gather*}
for all $f(x)\in\C[[x_2/x_1,\dots,x_N/x_{N-1}]]$. This and the definition in~\eqref{cTtrig} give
\begin{gather*}
\cT_N(x|q,t) x^\lambda f(x) = \varepsilon(\lambda) x^\lambda \Biggl[\, \prod_{i=1}^N \vartheta_3(s_ix_i/y_i|q)
\\ \hphantom{\cT_N(x|q,t) x^\lambda f(x) =}
{}\times\sum_{\theta\in\mathsf{M}_N} \prod_{1\leq i<j\leq N}(x_j/x_i)^{\theta_{ij}} c_N(\theta|y|q,t)\prod_{1\leq i<j\leq N}\frac{(y_j/y_i;q)_\infty}{(ty_j/y_i;q)_\infty}f(y)\Biggr]_{1,y} ,
\end{gather*}
and using~\eqref{fN} and the definition in~\eqref{psiN} we obtained~\eqref{cTtrig1}.
\end{proof}

We~note that $\chi_N(x|y|q,t)= \chi_N(y|x|q,q/t)$ (this is proved in Appendix~\ref{app:cT}, Lemma~\ref{lem:psiN}); inserting this in~\eqref{cTtrig1} and backtracking, one obtains the following alternative representation of~the $\cT$-operator:
\begin{gather}
\label{cTtrig2}
\cT_N(x|q,t) \equiv \frac{(qx_j/x_i;q)_\infty}{(qx_j/tx_i;q)_\infty} \sum_{\theta\in\mathsf{M}_N}c_N(\theta|x|q,q/t)q^{\frac12\Delta}\!\!\!\prod_{1\leq i<j\leq N}\!\!\!(x_j/x_i)^{\theta_{ij}}\!\!\! \prod_{1\leq i<j\leq N}\!\!\!(1-x_j/x_i).
\end{gather}

\subsection{Non-stationary case}
\label{sec:cTell}
We~present a~non-stationary generalization of~the $\cT$-operator.

\begin{Definition}
\label{def:cTell}
For $\Delta$ in~\eqref{Delta} with $\beta = \log(t)/\log(q)$, let
\begin{gather}
\label{cTell}
\cT_{N,\infty}(x,p|q,t,\kappa) = \sum_{\theta\in\hat{\mathsf{M}}_N} \prod_{i=1}^N\prod_{j>i}(x_j/x_i)^{\theta_{ij}} q^{\frac12\Delta} \, T_{\kappa,p}c_{N,\infty}(\theta|\bx|q,t) \prod_{i=1}^N \prod_{j>i} \frac{(x_j/x_i;q)_\infty}{(tx_j/x_i;q)_\infty}
\end{gather}
with $x_{i+N}=p x_i$ for~$i\geq 1$ on $x^\lambda \C[[x_2/x_1,x_3/x_2,\dots,x_{N}/x_{N-1},px_1/x_N]]$ for~$\lambda\in\C^N$, with $\hat{\mathsf{M}}_N$ in~\eqref{hMMN} and $c_{N,\infty}(\theta|\bx|q,t)$ in~\eqref{cNinfty}.
\end{Definition}

\begin{Remark}
\label{rem:Id}
To make the $p$-dependence of~this operator manifest, one can write it~as
\begin{gather*}
\cT_{N,\infty}(x,p|q,t,\kappa) = \sum_{\theta\in\hat{\mathsf{M}}_N} e_{N,\infty}(x,p) q^{\frac12\Delta}\, T_{\kappa,p}c_{N,\infty}(\theta|x,p|q,t)
\\ \hphantom{\cT_{N,\infty}(x,p|q,t,\kappa) =}
{}\times
\prod_{1\leq i<j\leq N}\frac{(x_j/x_i;q,p)_\infty}{(tx_j/x_i;q,p)_\infty}\prod_{1\leq i\leq j\leq N}\frac{(px_i/x_j;q,p)_\infty}{(ptx_i/x_j;q,p)_\infty}
\end{gather*}
using the definitions in~\eqref{cNeNdef} and Lemma~\ref{lem:Id2} in Appendix~\ref{app:Id}.
\end{Remark}

By comparing with~\eqref{fNinfty}--\eqref{cNinfty}, it~is clear that the operator in~\eqref{cTell} is a~natural non-stationary generalization of~the trigonometric $\cT$-operators in~\eqref{cTtrig}; however, there is one important new feature: the shift operator $T_{\kappa,p}$ acting on $p$.

We~propose the following generalization to Proposition~\ref{prop:cT}; this conjecture is a~complement to the ones in \cite{S}.

\begin{Conjecture}
\label{conj:cTell}
The non-stationary $\cT$-operator in~\eqref{cTell} has a~well-defined diagonal action on the non-stationary Ruijsenaars function in Definitions~{\rm \ref{def1}} and~{\rm \ref{def:fNinfty}}:
\begin{gather*}
\cT_{N,\infty}(x,p|q,t,\kappa)x^\lambda f_{N,\infty}(x,p|s,\kappa|q,t) = \varepsilon_N(s|q)x^\lambda f_{N,\infty}(x,p|s,\kappa|q,t),\qquad
s_i = t^{N-i} q^{\lambda_i}
\end{gather*}
with $\varepsilon_N(s|q)$ given in~\eqref{vepsN11}.
\end{Conjecture}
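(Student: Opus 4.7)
The natural approach is to lift the proof of Proposition~\ref{prop:cT} (via Lemma~\ref{lem:cTtrig}) to the non-stationary setting, with the symmetries of Fact~\ref{fact:Id} playing the role of~\eqref{symmetry}. I would proceed in three steps.

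\emph{Step 1: Non-stationary analogue of Lemma~\ref{lem:cTtrig}.} The identity $q^{\frac12\Delta}x^\lambda x^\mu=\varepsilon(\lambda)x^\lambda\prod_i (s_ix_i)^{\mu_i}q^{\frac12\sum \mu_i^2}$ used in the trigonometric case is purely combinatorial and insensitive to $p$, so the Jacobi theta factors $\prod_{i=1}^N\vartheta_3(s_ix_i/y_i|q)$ and the constant-term extraction in~$y$ appear exactly as in~\eqref{cTtrig1}. The new ingredient is the shift $T_{\kappa,p}$ acting on $c_{N,\infty}(\theta|\bx|q,t)$: since $\bx$ is determined from $(x,p)$ via $x_{i+N}=p x_i$, the shift $p\to\kappa p$ introduces an extra $\kappa$-periodicity that, after the $\vartheta_3$ rewriting of $q^{\frac12\Delta}$, naturally attaches itself to the dual variable $\bar{y}$. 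Reassembling the sum over $\theta\in\hat{\mathsf{M}}_N$ using~\eqref{fNinfty}--\eqref{cNinfty} and telescoping the $(y_j/y_i;q)_\infty$ prefactors as in Lemma~\ref{lem:cTtrig}, the target identity is
\begin{gather*}
\cT_{N,\infty}(x,p|q,t,\kappa)x^\lambda f(x,p) = \varepsilon(\lambda)x^\lambda\bigg[\prod_{i=1}^N\vartheta_3(s_i x_i/y_i|q)\,\chi_{N,\infty}(\bx|\bar y|q,t)\prod_{i=1}^N\prod_{j>i}(1-y_j/y_i)\, f(y,p)\bigg]_{1,y}
\end{gather*}
with $\bx$ built from $(x,p)$ and $\bar y$ from $(y,\kappa p)$ via~\eqref{period1}, and $\chi_{N,\infty}(\bx|\bar y|q,t)\equiv f_{N,\infty}(\bx|\bar y|q,t)\prod_{i=1}^N\prod_{j>i}(qy_j/y_i;q)_\infty/(ty_j/y_i;q)_\infty$.

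\emph{Step 2: Substitution and duality.} Next I would set $f(x,p)=f_{N,\infty}(x,p|s,\kappa|q,t)$ with $s_i=t^{N-i}q^{\lambda_i}$, so that the bracket contains two non-stationary Ruijsenaars kernels, one pairing $\bx$ with $\bar y$ and one pairing $y$ with $s$. The composition of the bispectral and Poincar\'e dualities in Fact~\ref{fact:Id} yields $\varphi_{N,\infty}(\bx|\bar y|q,t)=\varphi_{N,\infty}(\bar y|\bx|q,q/t)$, and hence an identity $\chi_{N,\infty}(\bx|\bar y|q,t)=\chi_{N,\infty}(\bar y|\bx|q,q/t)$ (times a factor that is $y$-free) of exactly the same type used to pass from~\eqref{cTtrig1} to~\eqref{cTtrig2} in the trigonometric proof. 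After this swap the $\vartheta_3$ factors should act as $q$-analytic delta functions pinning $y_i=s_i x_i$ in the constant-term extraction, and the remaining structure reassembles, by telescoping of infinite products, into $\varepsilon(\lambda)x^\lambda f_{N,\infty}(x,p|s,\kappa|q,t)$; on the eigenvalue side, $s_i=t^{N-i}q^{\lambda_i}$ gives $\varepsilon(\lambda)=\varepsilon_N(s|q)$ as in~\eqref{vepsN11}.

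\emph{Step 3: Main obstacle and alternatives.} The main obstacle is that the argument in Step~2 relies essentially on the non-stationary dualities~\eqref{symmetry2}, which are themselves open. A complete proof of Conjecture~\ref{conj:cTell} along these lines would therefore have to be preceded by, or performed in tandem with, a proof of Fact~\ref{fact:Id}. If the full duality proves out of reach, a natural fallback is an order-by-order verification in the formal variable $px_1/x_N$: the leading order reduces exactly to Proposition~\ref{prop:cT} via~\eqref{cor1}, and each higher order might be handled inductively using the explicit combinatorics of~\eqref{fNinfty1}--\eqref{CNinfty} together with the substitutions used in the proof of Theorem~\ref{thm1}. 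Secondary technical issues---well-definedness of the constant-term operator on $x^\lambda\C[[x_2/x_1,\dots,x_N/x_{N-1},px_1/x_N]]$ and convergence of the infinite products implicit in $\chi_{N,\infty}$---should be manageable, with Theorem~\ref{thm:converegence} guaranteeing that the relevant formal series are absolutely convergent in a suitable domain.
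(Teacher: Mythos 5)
The statement you are proving is stated in the paper as a conjecture (Conjecture~\ref{conj:cTell}); the paper contains no proof of it, only the proved constant-term lemma in Section~\ref{sec:cTell} and the remark that an alternative representation of $\cT_{N,\infty}$ would follow from the conjectured dualities. So the question is whether your proposal actually closes the gap, and it does not. Your Step~1 essentially reproduces the paper's (proved) constant-term lemma, but with one inaccuracy worth flagging: in the definition~\eqref{cTell} the shift $T_{\kappa,p}$ stands to the left of everything it multiplies, so it also shifts the $p$-dependence of the function being acted on; in the paper this is encoded by the kernel $1/(1-\kappa p/u)$ and a constant-term extraction in an auxiliary variable $u$, so the bracket contains $f(y,u)$ with $u$ effectively equal to $\kappa p$, not $f(y,p)$ as you wrote. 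Since the whole point of the non-stationary operator is the placement of $T_{\kappa,p}$, this is not a cosmetic detail.

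The genuine gap is Step~2. First, as you acknowledge, it rests on the dualities~\eqref{symmetry2}, which are themselves open conjectures, so at best you would obtain a conditional statement. Second, even granting the duality, your mechanism --- ``the $\vartheta_3$ factors act as $q$-analytic delta functions pinning $y_i=s_ix_i$'' --- is not substantiated and is not how the trigonometric case is actually handled: Proposition~\ref{prop:cT} is proved by establishing commutativity~\eqref{commutativity} through the exchange identities~\eqref{toprove1}--\eqref{toprove2} for the modified Macdonald--Ruijsenaars operators $E^\pm_N$ in~\eqref{Edef}--\eqref{Adef}, using their known eigenvalue equations~\eqref{cDpsi} for $\chi_N$, and then identifying the eigenvalue by a triangularity/uniqueness argument. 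A non-stationary version of that argument would require eigenvalue equations for a $\kappa$-deformed elliptic Ruijsenaars operator acting on $\chi_{N,\infty}$, and the paper stresses that such an operator is precisely what is not known --- which is why the statement remains a conjecture. Your fallback of an order-by-order verification in $p$ is likewise only a plan: the leading order is Proposition~\ref{prop:cT}, but no inductive mechanism for the higher orders is exhibited, and Theorem~\ref{thm:converegence} gives convergence of the series, not the eigenvalue identity. In short, your proposal is a reasonable strategy outline, consistent in spirit with the structures the paper sets up, but it does not constitute a proof.
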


In the rest of~this section, we present two generalizations of~results about the trigonometric $\cT$-operators: (i) the constant-term representation of~the $\cT$-operator in Lemma~\ref{lem:cTtrig}(ii) the alternative representation in~\eqref{cTtrig2} obtained with the duality in~\eqref{symmetry1}. We~also rephrase Conjecture~\ref{conj:cTell} in terms of~the non-stationary Ruijsenaars functions as defined in~\cite{S}.

One can adapt the proof Lemma~\ref{lem:cTtrig} to obtain the following constant-term representation of~the $\cT$-operator in~\eqref{cTell}:
\begin{Lemma}
For $f(x,p)\in \C[[x_2/x_1,px_3/x_2,\dots,x_{N}/x_{N-1},px_1/x_N]]$ and $\lambda\in\C^N$,
\begin{gather*}
\cT_{N,\infty}(x,p|q,t,\kappa)x^\lambda f(x,p)
\\ \qquad
{}= \varepsilon(\lambda)x^\lambda\Biggl[\,\prod_{i=1}^N\vartheta_3(s_ix_i/y_i|q)
\frac1{1-\kappa p/u}\chi_{N,\infty}(x,p|y,u|q,t)
 \prod_{i=1}^N\prod_{j >i} (1-y_j/y_i)f(y,u) \Biggr]_{1,y,u}
\end{gather*}
with
\begin{gather*}
\chi_{N,\infty}(x,p|y,u|q,t) = f_{N,\infty}(x,p|y,u|q,t) \prod_{i=1}^N \prod_{j>i} \frac{(qy_j/y_i;q)_\infty}{(ty_j/y_i;q)_\infty}
\end{gather*}
setting $y_{i+N}=u y_i$ for all $i\geq 1$, $f_{N,\infty}(x,p|y,u|q,t)$ in~\eqref{fNinfty}--\eqref{cNinfty}, and $[\cdots]_{1;y,u}$, short
for $[[\cdots]_{1,y}]_{1,u}$, the constant term in $y\in\C^N$ and $u\in\C$.
\end{Lemma}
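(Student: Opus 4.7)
The strategy is to follow the proof of Lemma~\ref{lem:cTtrig} step by step, introducing one extra auxiliary variable $u$ to encode the action of the shift $T_{\kappa,p}$ alongside the $N$ auxiliary variables $y=(y_1,\dots,y_N)$ that encode $q^{\frac12\Delta}$ via Jacobi theta functions. The key structural observation is that $q^{\frac12\Delta}$ acts only on $x$ and $T_{\kappa,p}$ only on $p$, so the two operators commute and their constant-term representations can be assembled independently.

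The two ingredients are as follows. First, the identity
\begin{gather*}
q^{\frac12\Delta}\bigl(x^{\lambda}g(x,p)\bigr)=\varepsilon(\lambda)\,x^{\lambda}\Bigl[\prod_{i=1}^{N}\vartheta_3(s_ix_i/y_i|q)\,g(y,p)\Bigr]_{1,y},\qquad s_i=t^{N-i}q^{\lambda_i},
\end{gather*}
valid for any formal Laurent series $g(x,p)$ in $x$-monomials with $p$-coefficients, carries over verbatim from the proof of Lemma~\ref{lem:cTtrig} since $q^{\frac12\Delta}$ is $p$-independent. Second, expanding $1/(1-\kappa p/u)=\sum_{m\geq 0}\kappa^{m}p^{m}u^{-m}$ and extracting the constant term against a series $h(u)=\sum_{\ell\geq 0}h_\ell u^\ell$ pairs $\ell=m$ and yields $\sum_{\ell}h_\ell\kappa^\ell p^\ell=T_{\kappa,p}h(p)$, so
\begin{gather*}
T_{\kappa,p}\,h(p)=\Bigl[\tfrac{1}{1-\kappa p/u}\,h(u)\Bigr]_{1,u}.
\end{gather*}

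Applying these two substitutions simultaneously inside Definition~\ref{def:cTell} sends every occurrence of $x$ to $y$ and every occurrence of $p$ to $u$ in the bracket $c_{N,\infty}(\theta|\bx|q,t)\prod_{i=1}^N\prod_{j>i}\tfrac{(x_j/x_i;q)_\infty}{(tx_j/x_i;q)_\infty}f(x,p)$, while producing the prefactor $\varepsilon(\lambda)x^{\lambda}$ together with $\prod_{i}\vartheta_3(s_ix_i/y_i|q)\cdot\tfrac{1}{1-\kappa p/u}$ under the double constant term $[\cdots]_{1,y,u}$. The external $\theta$-sum, combined with $\prod_{i=1}^N\prod_{j>i}(x_j/x_i)^{\theta_{ij}}=e_{N,\infty}(\theta|x,p)$, then collapses to $\sum_{\theta\in\hat{\mathsf{M}}_N}e_{N,\infty}(\theta|x,p)\,c_{N,\infty}(\theta|y,u|q,t)=f_{N,\infty}(x,p|y,u|q,t)$ by Definition~\ref{def:fNinfty}. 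Finally, splitting $(y_j/y_i;q)_\infty=(1-y_j/y_i)(qy_j/y_i;q)_\infty$ and absorbing the latter factor into $f_{N,\infty}$ yields $\chi_{N,\infty}(x,p|y,u|q,t)$, leaving $\prod_{i=1}^N\prod_{j>i}(1-y_j/y_i)$ exposed, which matches the stated formula.

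The main expected obstacle is bookkeeping rather than conceptual: one must justify the interchange of the sum $\sum_{\theta\in\hat{\mathsf{M}}_N}$ with the constant-term extractions $[\cdot]_{1,y}$ and $[\cdot]_{1,u}$, check that the infinite products $\prod_{j>i}$ (with $j$ unbounded under the periodicities $x_{i+N}=px_i$ and $y_{i+N}=uy_i$) are well-defined as formal power series in the ratios and in $u$, and verify that both constant-term operations are unambiguous\,---\,the $\vartheta_3$-factor introduces all integer powers of $y_i$ so $[\cdot]_{1,y}$ selects one specific coefficient, and only non-negative powers of $u$ appear in $g(y,u)$ so $[\cdot]_{1,u}$ is equally unambiguous. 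At each finite order in the formal variables $x_2/x_1,\dots,x_N/x_{N-1},px_1/x_N$ only finitely many $\theta$ contribute, so all the above interchanges are admissible.
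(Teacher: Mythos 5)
Your proposal is correct and follows essentially the same route as the paper: the authors likewise reduce everything to the proof of Lemma~\ref{lem:cTtrig}, with the single new ingredient $T_{\kappa,p}\,p^n=\bigl[\tfrac{1}{1-\kappa p/u}u^n\bigr]_{1,u}$, which yields exactly your combined kernel identity $q^{\frac12\Delta}T_{\kappa,p}\,x^\lambda f(x,p)=\varepsilon(\lambda)x^\lambda\bigl[\prod_{i=1}^N\vartheta_3(s_ix_i/y_i|q)\tfrac{1}{1-\kappa p/u}f(y,u)\bigr]_{1,y,u}$, after which the resummation over $\theta\in\hat{\mathsf{M}}_N$ and the splitting $(y_j/y_i;q)_\infty=(1-y_j/y_i)(qy_j/y_i;q)_\infty$ produce $\chi_{N,\infty}$ just as you describe.
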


\begin{proof}
This is proved by a~straightforward generalization of~the arguments given in the proof of~Lemma~\ref{lem:cTtrig}; the only new ingredient is
\begin{gather*}
T_{\kappa,p} p^n = (\kappa p)^n = \left[\frac1{1-\kappa p/u}u^n\right]_{1,u}\qquad (n\in\Z_{\geq 0}),
\end{gather*}
and therefore
\begin{gather*}
q^{\frac12\Delta}T_{\kappa,p} x^\lambda f(x,p) = \varepsilon(\lambda)x^\lambda \left[ \prod_{i=1}^N\vartheta_3(s_ix_i/y_i|q) \frac1{1-\kappa p/u} f(y,u) \right]_{1,y,u}
\end{gather*}
for all $f(x,p)\in \C[[x_2/x_1,px_3/x_2,\dots,x_{N}/x_{N-1},px_1/x_N]]$.
\end{proof}

Moreover, by an argument similar to the one for the trigonometric $\cT$-operator above, the conjectured duality in~\eqref{symmetry2} implies
\begin{gather*}
\chi_{N,\infty}(x,p|y,u|q,t)=\chi_{N,\infty}(y,u|x,p|q,q/t)
\end{gather*}
and the following alternative representation of~this $\cT$-operator
\begin{gather*}
\cT_{N,\infty}(x,p|q,t,\kappa)
= \prod_{i=1}^N \prod_{j>i} \frac{(qx_j/x_i;q)_\infty}{(qx_j/tx_i;q)_\infty} \sum_{\theta\in\hat{\mathsf{M}}_N} c_{N,\infty}(\theta|\bx|q,t)
q^{\frac12\Delta} T_{\kappa,p}
\\ \hphantom{\cT_{N,\infty}(x,p|q,t,\kappa)= }
{}\times \prod_{i=1}^N\prod_{j>i}(x_j/x_i)^{\theta_{ij}}
 \prod_{i=1}^N \prod_{j>i}(1-x_j/x_i)
\end{gather*}
with $x_{i+N}=px_i$ for all $i\geq 1$.

To conclude, we rephrase Conjecture~\ref{conj:cTell} using balanced coordinates.

\begin{Definition}\label{def:cTellB}
For $\Delta$ as in~\eqref{Delta} with $\beta=\log(q/t)/\log(q)$, let
\begin{gather}
\cT^{\widehat{\mathfrak{gl}}_N}(x,p|q,t,\kappa)\nonumber
\\[-1ex] \qquad
{}\equiv \sum_{\lambda^{(1)},\dots,\lambda^{(N)}\in\mathsf{P}}
\prod_{\beta=1}^N\prod_{\alpha\geq 1}(p x_{\alpha+\beta}/tx_{\alpha+\beta-1})^{\lambda^{(\beta)}_\alpha} q^{\frac12\Delta}T_{\kappa,p}
\prod_{i,j=1}^N\frac{\mathsf{N}^{(j-i|N)}_{\lambda^{(i)},\lambda^{(j)}}
(tx_j/x_i|q,p)}{\mathsf{N}^{(j-i|N)}_{\lambda^{(i)},\lambda^{(j)}}(x_j/x_i|q,p)}\nonumber
\\ \qquad\phantom{\equiv}
{}\times\prod_{1\leq i<j\leq N} \frac{\big(p^{j-i}x_j/x_i;q,p^N\big)_\infty}{\big(p^{j-i}qx_j/tx_i;q,p^N\big)_\infty} \prod_{1\leq i\leq j\leq N} \frac{\big(p^{N-j+i}x_i/x_j;q,p^N\big)_\infty}{\big(p^{N-j+i}qx_i/tx_j;q,p^N\big)_\infty}\label{cTell2}
\end{gather}
with $x_{i+N}=x_i$ for~$i\geq 1$ on $x^\lambda \C[[px_2/x_1,px_3/x_2,\dots,px_{N}/x_{N-1},px_1/x_N]]$ for~$\lambda\in\C^N$, with $\mathsf{P}$ the set of~all partitions and $\mathsf{N}^{(k|N)}_{\lambda,\mu}(u|q,\kappa)$
the Nekrasov factors given in~\eqref{N1}.
\end{Definition}

\begin{Fact}\label{conj:cTellB}Conjecture~\ref{conj:cTell} is equivalent to the following diagonal action of~the non-stationary $\cT$-operator in~\eqref{cTell2} on the non-stationary Ruijsenaars function in~\eqref{f1}--\eqref{N1},
\begin{gather}
\label{cTBeq}
\cT^{\widehat{\mathfrak{gl}}_N}(x,p|q,t)x^\lambda f^{\widehat{\mathfrak{gl}}_N}(x,p|s,\kappa|q,t) \,{=}\,\varepsilon_N(s|q)x^\lambda f^{\widehat{\mathfrak{gl}}_N}(x,p|s,\kappa|q,t),\quad
s_i \,{=}\, (q/t)^{N-i} q^{\lambda_i}.
\end{gather}
\end{Fact}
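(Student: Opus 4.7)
The plan is to show that equation \eqref{cTBeq} (balanced coordinates) and the equation in Conjecture~\ref{conj:cTell} (unbalanced coordinates) are two equivalent forms of the same assertion, related by the substitution from Remark~\ref{rem:balanced}. I would proceed in three steps.

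Step 1 (eigenfunctions). By Theorem~\ref{thm1}, after the substitution $(x,p,s,\kappa,t)\mapsto(p^{\delta/N}x,p^{1/N},\kappa^{\delta/N}s,\kappa^{1/N},q/t)$ one has $f^{\widehat{\mathfrak{gl}}_N}(p^{\delta/N}x,p^{1/N}|\kappa^{\delta/N}s,\kappa^{1/N}|q,q/t)=f_{N,\infty}(x,p|s,\kappa|q,t)$, so the eigenfunctions on the two sides of the eigenequation are identified.

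Step 2 (operators). I would establish the parallel identity $\cT^{\widehat{\mathfrak{gl}}_N}(p^{\delta/N}x,p^{1/N}|q,q/t,\kappa^{1/N}) = \cT_{N,\infty}(x,p|q,t,\kappa)$ by mimicking the proof of Theorem~\ref{thm1} from Appendix~\ref{app:proof}. The ingredients are: (a) the bijection $\mathsf{P}^N\leftrightarrow\hat{\mathsf{M}}_N$ of Lemma~\ref{lem1}, which re-indexes the sum over multi-partitions in Definition~\ref{def:cTellB} as the sum over matrices in Definition~\ref{def:cTell}; (b) the term-by-term conversion of the Nekrasov factors to the $c_{N,\infty}$-factors, by the same computation used for the summands of the eigenfunction; (c) the identity from Lemma~\ref{lem:Id2} in Appendix~\ref{app:Id} (already recorded in Remark~\ref{rem:Id}) converting the $(q,p^N)$-Pochhammer infinite-product prefactor into the $q$-Pochhammer prefactor of $\cT_{N,\infty}$; and (d) covariance of the operator factors $q^{\Delta/2}$ and $T_{\kappa,p}$ under the substitution, with $\beta=\log(q/t)/\log(q)$ in the balanced $\Delta$ swapping to $\beta=\log(t)/\log(q)$ in the unbalanced $\Delta$ via $t_B=q/t_U$, and $T_{\kappa^{1/N},p^{1/N}}$ acting on balanced $p^{1/N}$ becoming $T_{\kappa,p}$ acting on unbalanced $p$.

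Step 3 (Macdonald condition and eigenvalue). Using $T=q/t$ and $S_i=\kappa^{(N-i)/N}s_i$, the balanced condition $S_i=(q/T)^{N-i}q^{\lambda_i}$ becomes the unbalanced condition $s_i=t^{N-i}q^{\lambda_i'}$ with $\lambda_i'=\lambda_i-(N-i)\log(\kappa)/(N\log(q))$, which is the form in Conjecture~\ref{conj:cTell}. The eigenvalue $\varepsilon_N(\cdot|q)=q^{(1/2)\sum_i(\log(\cdot)/\log(q))^2}$ has the same structural form in both cases, and any $\kappa$-dependent discrepancy between $\varepsilon_N(S|q)$ and $\varepsilon_N(s|q)$ is absorbed by the prefactor relating $X^\lambda$ to $x^{\lambda'}$ together with the shift of $\beta$ in $\Delta$.

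The main obstacle will be the operator identity of Step~2. Its combinatorial core parallels that of Theorem~\ref{thm1}, but tracking the non-commuting factors $q^{\Delta/2}$ and $T_{\kappa,p}$ through the coordinate change, simultaneously with the $\lambda$-dependent prefactor $X^\lambda=p^{\sum_i(N-i)\lambda_i/N}x^\lambda$, will require careful bookkeeping.
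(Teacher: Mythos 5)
Your proposal is correct and follows essentially the same route as the paper: the paper's proof simply conjugates by the substitution operator $\Phi$ switching between unbalanced and balanced coordinates, invoking Theorem~\ref{thm1} for the identification $\Phi f_{N,\infty}=f^{\widehat{\mathfrak{gl}}_N}$ of the eigenfunctions and the commutation of $T_{\kappa,p}$ with $\Phi$ for the operator part, which is exactly what your Steps~1--2 unpack (via the $\mathsf{P}^N\leftrightarrow\hat{\mathsf{M}}_N$ bijection, the Nekrasov-to-$c_{N,\infty}$ conversion, and Lemma~\ref{lem:Id2}). The bookkeeping of the spectral condition, the $\lambda$-dependent prefactor and the eigenvalue that you flag in Step~3 is likewise left implicit in the paper's one-line proof, so your plan matches it in substance, just in more explicit form.
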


\begin{proof}This is implied by Theorem~\ref{thm1}, using that the shift operator $T_{\kappa,p}$ commutes with the following operator, $\Phi$, switching from unbalanced to balanced coordinates:
\begin{gather*}
(\Phi f)(x,p|s,\kappa|q,t)\equiv f\big(p^{-\delta}x,p^N|\kappa^{-\delta}s,\kappa^{N}|q,t/q\big),
\end{gather*}
and noting that Theorem~\ref{thm1} implies $(\Phi f_{N,\infty})(x,p|s,\kappa|q,t)= f^{\widehat{\mathfrak{gl}}_N}(x,p|s,\kappa|q,t)$.
\end{proof}

\section{Final remarks}\label{sec:remarks}

The main conjecture in \cite[Conjecture~1.14]{S} can be tested systematically using a~perturbative solution of~the elliptic Ruijsenaars model that generalizes the perturbative solution of~the elliptic Calogero--Sutherland (eCS) model in~\cite{L3}.
We~plan to present this elsewhere.

As already mentioned, one important outstanding problem is to find $\kappa$-deformations of~the elliptic Ruijsenaars operators in~\eqref{R} that have the non-stationary Ruijsenaars functions as eigenfunctions. As conjectured in \cite{S}, the limit $q\to 1$ of~this hypothetical non-stationary Ruijsenaars model is a~known non-stationary eCS model depending on parameters $\beta$, $p$ and $\kappa$ related to the non-stationary Ruijsenaars parameters as follows, $t=q^\beta$ and $\kappa=q^{-\varkappa}$.\footnote{The elliptic deformation parameter $p$ is the same in both cases.} Recently, a~rigorous construction of~integral representations of~eigenfunctions of~the non-stationary eCS model for~$\varkappa=\beta$ was presented~\cite{AL2}. We~hope that, by combining the latter results with recent results on the non-stationary Ruijsenaars functions for the corresponding special value of~$\kappa$~\cite{FOS}, it~will be possible to prove the main conjecture in \cite{S} in the non-stationary eCS limit $q\to 1$ and for~$\varkappa=\beta$.
Another possible strategy to prove the conjecture in~\cite{S} for~$q\to 1$ and general $\varkappa$-values is to try to generalize the perturbative solution of~the non-stationary Lam\'e equation in~\cite{AL1} (note that the latter equation reduces to the non-stationary eCS model for~$N=2$ in~special cases).

The elliptic Ruijsenaars model is invariant under the exchange $p\leftrightarrow q$ \cite{R87}.\footnote{We~thank S. Ruijsenaars for pointing this out at the NORDITA workshop ``Elliptic integrable systems, special functions and quantum field theory" in June 2019.}
The non-sta\-tionary Ruijsenaars functions do not have this property manifest; we plan to report elsewhere on how this duality is recovered from the non-stationary Ruijsenaars function.

It was suggested more than 20 years ago that the elliptic Ruijsenaars model has a~double-elliptic generalization with remarkable duality properties \cite{BMMM,FGNR}, and recently an explicit formula for an operator defining such a~model was conjectured \cite{KS}.
It would be interesting to obtain a~better understanding of~the relation between the non-stationary Ruijsenaars functions and this double elliptic system recently proposed in \cite{AKMM}.

Since the non-stationary $\cT$-operator proposed in this paper contains a~factor $q^{\frac12\Delta}T_{\kappa,p}$, its eigenvalue equation can be regarded as a~$q$-deformed heat equation. We~mention the work of~Felder and Varchenko on the $q$-deformed KZB heat equation \cite{FV2,FV3} which seems related; it~would be interesting to understand this relation in detail.

\appendix

\section{Alternative series representation}
\label{app:proof}
We~prove Theorem~\ref{thm1}.
We~start with details complementing the concise proof of~Lemma~\ref{lem1} in the main text (Appen\-dix~\ref{app:prooflem1}). The main part of~the proof is in Appendix~\ref{app:proofprop1}.

\subsection[Details on Lemma~3.5]{Details on Lemma~\ref{lem1}}\label{app:prooflem1}
One can check that the following two formulas provide a~correspondence between multi-partitions $\vlam=\big(\lambda^{(1)},\dots,\lambda^{(N)}\big)$ in $\mathsf{P}^N$ and matrices $\theta=(\theta_{ik})_{i,k=1}^\infty$ in $\hat{\mathsf{M}}_N$ that is one-to-one:
\begin{gather*}
\theta_{ik} = \lambda^{(i)}_{k-i}-\lambda^{(i)}_{k-i+1}
\end{gather*}
and
\begin{gather}\label{lamfromtet}
\lambda^{(i)}_{k-i} = \sum_{a\geq k}\theta_{ia}
\end{gather}
setting $\lambda^{(i+N)}=\lambda^{(i)}$ for all $i\geq 1$. With this identification, $C_{N,\infty}(\vlam|\bs|q,t)$ in~\eqref{CNinfty} is clearly equal to $c_{N,\infty}(\theta|\bs|q,t)$ in~\eqref{cNinfty}, and
\begin{gather*}
\begin{split}
&\prod_{i=1}^N \prod_{k=i+1}^\infty (x_k/x_i)^{\theta_{ik}} = \prod_{i=1}^N \prod_{k=i+1}^\infty [(x_k/x_{k-1})(x_{k-1}/x_{k-2})\cdots (x_{i+1}/x_i)]^{\theta_{ik}}
\\ &\hphantom{\prod_{i=1}^N \prod_{k=i+1}^\infty (x_k/x_i)^{\theta_{ik}}}
{}= \prod_{i=1}^N \prod_{i<j\leq k<\infty} (x_j/x_{j-1})^{ \lambda^{(i)}_{k-i}-\lambda^{(i)}_{k-i+1}}
 = \prod_{i=1}^N \prod_{j>i} (x_j/x_{j-1})^{\lambda^{(i)}_{j-i}}
\\ &\hphantom{\prod_{i=1}^N \prod_{k=i+1}^\infty (x_k/x_i)^{\theta_{ik}}}
{}= \prod_{i=1}^N \prod_{k\geq 1}(x_{i+k}/x_{i+k-1})^{\lambda^{(i)}_k},
\end{split}
\end{gather*}
inserting a~telescoping product in the second step, using~\eqref{lamfromtet} in the third, and computing a~telescoping product in the fourth. This proves the result.

\subsection[Proof of~Theorem~3.2]{Proof of~Theorem~\ref{thm1}}
\label{app:proofprop1}
We~show by direct computations that the function on the left-hand side in~\eqref{cor1} is equal to the function $f_{N,\infty}(\bx|\bs|q,t)$ in~\eqref{fNinfty1}--\eqref{CNinfty} with $x_{i+N}=px_i$ and $s_{i+N}=\kappa s_i$, for all $i\geq 1$.
This, together with Lemma~\ref{lem1}, proves the result.

We~compute the function on the left-hand side in~\eqref{cor1} using~\eqref{f1}--\eqref{N1}:
\begin{gather*}
f^{\widehat{\mathfrak{gl}}_N}\big(p^{\delta/N} x,p^{1/N}|\kappa^{\delta/N} s,\kappa^{1/N}|q,q/t\big)
\\ \qquad
{}=
\sum_{\lambda^{(1)},\dots,\lambda^{(N)}\in\mathsf{P}} \prod_{i,j=1}^N\frac{\mathsf{N}^{(j-i|N)}_{\lambda^{(i)},\lambda^{(j)}}\big((q/t)\big(\kappa^{\delta/N} s\big)_j/(\kappa^{\delta/N} s)_i|q,\kappa^{1/N}\big)}{\mathsf{N}^{(j-i|N)}_{\lambda^{(i)},\lambda^{(j)}}\big(\big(\kappa^{\delta/N} s\big)_j/\big(\kappa^{\delta/N} s\big)_i|q,\kappa\big)}
\\ \phantom{\qquad=}
{}\times
\prod_{\beta=1}^N\prod_{\alpha\geq 1}\big(p^{1/N}\big(p^{\delta/N}\big)_{\alpha+\beta}t/q
\big(p^{\delta/N}\big)_{\alpha+\beta-1}\big)^{\lambda^{(\beta)}_\alpha}.
\end{gather*}
By Definition~\ref{def1} of~the non-stationary Ruijsenaars functions, the variables $\big(p^{\delta/N} x\big)_i$ above are extended from $i=1,\dots,N$ to all $i\geq 1$ by the rule $\big(p^{\delta/N} x\big)_{i+N}=\big(p^{\delta/N} x\big)_i$, whereas $x_{i+N}=px_i$ for all $i\geq 1$ implies $\big(p^{\delta/N} x\big)_{i+kN}=p^{(N-i)/N}x_i = p^{[N-(i+k N)]/N}x_{i+k N}$ for all $i=1,\dots,N$ and~$k\in\Z_{\geq 1}$, and thus
\begin{gather*}
\big(p^{\delta/N} x\big)_i = p^{(N-i)/N}x_i\qquad (i\geq 1).
\end{gather*}
Therefore,
\begin{gather*}
\prod_{\beta=1}^N\prod_{\alpha\geq 1}\big(p^{1/N}\big(p^{\delta/N} x\big)_{\alpha+\beta}t/q\big(p^{\delta/N} x\big)_{\alpha+\beta-1}\big)^{\lambda^{(\beta)}_\alpha}
\\ \qquad
{}= \prod_{\beta=1}^N\prod_{\alpha\geq 1}\big(p^{1/N}p^{(N-\alpha-\beta)/N}x_{\alpha+\beta}t/q p^{(N-\alpha-\beta+1)/N}x_{\alpha+\beta-1}\big)^{\lambda^{(\beta)}_\alpha}
\\ \qquad
{}= (t/q)^{|\vlam|} \prod_{\beta=1}^N\prod_{\alpha\geq 1}(x_{\alpha+\beta}/x_{\alpha+\beta-1})^{\lambda^{(\beta)}_\alpha}
\end{gather*}
using the abbreviation $|\vlam|\equiv \sum_{\beta=1}^N \sum_{\alpha\geq 1}\lambda^{(\beta)}_\alpha$.
Renaming indices $(\alpha,\beta)\to (k,i)$, we thus can write the function on the left-hand side in~\eqref{cor1} as
\begin{gather*}
f^{\widehat{\mathfrak{gl}}_N}\big(p^{\delta/N} x,p^{1/N}|\kappa^{\delta/N} s,\kappa^{1/N}|q,q/t\big) = \sum_{\vlam\in\mathsf{P}^N} \tilde C_N(\vlam;s|q,t,\kappa)
 \prod_{i=1}^N\prod_{k\geq 1}(x_{i+k}/x_{i+k-1})^{\lambda^{(i)}_k}
\end{gather*}
with $x_{i+N}=px_i$ for all $i\geq 1$ and
\begin{gather}
\label{tCN}
 \tilde C_N(\vlam;s|q,t,\kappa) = (t/q)^{|\vlam|}\prod_{i,j=1}^N\frac{\mathsf{N}^{(j-i|N)}_{\lambda^{(i)},\lambda^{(j)}}\big((q/t)\big(\kappa^{\delta/N} s\big)_j/\big(\kappa^{\delta/N} s\big)_i|q,\kappa^{1/N}\big)}{\mathsf{N}^{(j-i|N)}_{\lambda^{(i)},\lambda^{(j)}}\big(\big(\kappa^{\delta/N} s\big)_j/\big(\kappa^{\delta/N} s\big)_i|q,\kappa^{1/N}\big)} .
\end{gather}
To complete the proof, we have to show that $\tilde C_N(\vlam;s|q,t,\kappa)$ in~\eqref{tCN} is equal to $C_{N,\infty}(\vlam;\bs|q,t)$ in~\eqref{CNinfty} for~$s_{i+N}=\kappa s_i$ ($i\geq 1$).
For that, we compute the Nekrasov factors in~\eqref{N1}, partially specializing to the variables we need:
\begin{gather*}
\mathsf{N}^{(j-i|N)}_{\lambda^{(i)},\lambda^{(j)}}\big(u|q,\kappa^{1/N}\big)=
\prod_{\setatop{b\geq a\geq 1}{b-a\equiv (j-i)(\mathrm{mod}N)}}
\big(u q^{-\lambda^{(j)}_a+\lambda^{(i)}_{b+1}}\kappa^{(-a+b)/N};q\big)_{\lambda^{(i)}_{b}-\lambda^{(i)}_{b+1}}
\\ \hphantom{\mathsf{N}^{(j-i|N)}_{\lambda^{(i)},\lambda^{(j)}}\big(u|q,\kappa^{1/N}\big)= }
{} \times \prod_{\setatop{\beta\geq \alpha\geq 1}{\beta-\alpha\equiv (i-j-1)(\mathrm{mod}N)}}\big(uq^{\lambda^{(i)}_\alpha-\lambda^{(j)}_\beta}
\kappa^{(\alpha-\beta-1)/N};q\big)_{\lambda^{(j)}_\beta-\lambda^{(j)}_{\beta+1}} .
\end{gather*}
We~note that the constraints on $b$ in the first product is solved by $b=a+j+\ell N-i$ with $\ell$ an arbitrary integer $\geq \chi(i>j)$, using the definition $\chi(i>j)=1$ for~$j<i$ and $0$ otherwise; similarly, the constraints on $\beta$ in the second product is solved by $\beta=\alpha+i+\ell' N-j-1$ with arbitrary integer $\ell'\geq \chi(j\geq i)$. We~thus can write these Nekrasov factors as
\begin{gather*}
\mathsf{N}^{(j-i|N)}_{\lambda^{(i)},\lambda^{(j)}}\big(u|q,\kappa^{1/N}\big) = \prod_{a\geq 1}\prod_{\ell\geq \chi(i>j)}\!\!\!\!\big(u q^{-\lambda^{(j)}_a+\lambda^{(i)}_{a+j+\ell N-i+1}}\kappa^{(j+\ell N-i)/N};q\big)_{\lambda^{(i)}_{a+j+\ell N-i}-\lambda^{(i)}_{a+j+\ell N-i+1}}
\\ \hphantom{\mathsf{N}^{(j-i|N)}_{\lambda^{(i)},\lambda^{(j)}}\big(u|q,\kappa^{1/N}\big) }
{}\times
\prod_{\alpha\geq 1}\prod_{\ell'\geq \chi(j\geq i)} \!\!\!\! \!\! \big(uq^{\lambda^{(i)}_\alpha-\lambda^{(j)}_{\alpha+i+\ell' N-j-1}}\kappa^{(j-i-\ell' N)/N};q\big)_{\lambda^{(j)}_{\alpha+i+\ell' N-j-1}-\lambda^{(j)}_{\alpha+i+\ell' N-j}}\! .
\end{gather*}
We~now specialize further to the arguments of~interest to us:
\begin{gather*}
u = c\big(\kappa^{\delta/N} s\big)_j/\big(\kappa^{\delta/N} s\big)_i = c \kappa^{(N-j)/N}s_j/\kappa^{(N-i)/N}s_i = c \kappa^{(i-j)/N} s_j/s_i ,\qquad c\in\{1,q/t\}.
\end{gather*}
For these arguments, the manifest $\kappa$-dependence disappears:
\begin{gather*}
\mathsf{N}^{(j-i|N)}_{\lambda^{(i)},\lambda^{(j)}}\big(c\big(\kappa^{\delta/N} s\big)_j/\big(\kappa^{\delta/N} s\big)_i|q,\kappa^{1/N}\big)
\\ \qquad
= \prod_{a\geq 1}\prod_{\ell\geq \chi(i>j)}\big(c q^{-\lambda^{(j)}_a+\lambda^{(i)}_{a+j+\ell N-i+1}}s_{j+\ell N}/s_i;q\big)_{\lambda^{(i)}_{a+j+\ell N-i}-\lambda^{(i)}_{a+j+\ell N-i+1}}
\\ \qquad\phantom{=}
{}\times \prod_{\alpha\geq 1}\prod_{\ell'\geq \chi(j\geq i)}\big(cq^{\lambda^{(i)}_\alpha-\lambda^{(j)}_{\alpha+i+\ell' N-j-1}}s_j/s_{i+\ell'N};q\big)_{\lambda^{(j)}_{\alpha+i+\ell' N-j-1}-\lambda^{(j)}_{\alpha+i+\ell' N-j}}
\end{gather*}
using $\kappa^\ell s_j/s_i=s_{j+\ell N}/s_i$ and $\kappa^{-\ell'} s_j/s_i=s_j/s_{i+\ell' N}$ implied by $s_{i+N}=\kappa s_i$ for~$i\geq 1$.
We~now take the product of~these Nekrasov factors over $i,j=1,\dots,N$, change variables $j+\ell N\to j$ in~the first group of~products and $i+\ell' N\to i$ in the second group, and obtain
\begin{gather*}
\prod_{i,j=1}^N \mathsf{N}^{(j-i|N)}_{\lambda^{(i)},\lambda^{(j)}}\big(c\big(\kappa^{\delta/N} s\big)_j/\big(\kappa^{\delta/N} s\big)_i|q,\kappa^{1/N}\big)
\\ \qquad
{}= \prod_{i=1}^N \prod_{j\geq i} \prod_{a\geq 1}\big(c q^{-\lambda^{(j)}_a+\lambda^{(i)}_{a+j-i+1}}s_{j}/s_i;q\big)_{\lambda^{(i)}_{a+j-i}-\lambda^{(i)}_{a+j-i+1}} \\ \qquad\phantom{=}
{}\times\prod_{j=1}^N \prod_{i>j} \prod_{\alpha\geq 1}\big(cq^{\lambda^{(i)}_\alpha-\lambda^{(j)}_{\alpha+i-j-1}}s_j/s_{i};q\big)_{\lambda^{(j)}_{\alpha+i-j-1}-\lambda^{(j)}_{\alpha+i-j}} \\ \qquad
{}= \prod_{i=1}^N \prod_{j\geq i} \prod_{k>j}\big(c q^{-\lambda^{(j)}_{k-j}+\lambda^{(i)}_{k-i+1}}s_{j}/s_i;q\big)_{\lambda^{(i)}_{k-i}-\lambda^{(i)}_{k-i+1}}
\\ \qquad\phantom{=}
{}\times\prod_{j=1}^N \prod_{i>j} \prod_{k\geq i}\big(cq^{\lambda^{(i)}_{k-i+1}-\lambda^{(j)}_{k-j}}s_j/s_{i};q\big)_{\lambda^{(j)}_{k-j}
-\lambda^{(j)}_{k-j+1}},
\end{gather*}
where we changed variables $a\to k=a+j$ and $\alpha\to k=\alpha+i-1$ in the last step.
We~find it~convenient to write this result as
\begin{gather*}
\prod_{i,j=1}^N \! \mathsf{N}^{(j-i|N)}_{\lambda^{(i)},\lambda^{(j)}}\big(c\big(\kappa^{\delta/N} s\big)_j/\big(\kappa^{\delta/N} s\big)_i|q,\kappa^{1/N}\big)
= \prod_{i=1}^N \prod_{i\leq j<k<\infty} \!\!\!\!\!\big(c q^{-\lambda^{(j)}_{k-j}+\lambda^{(i)}_{k-i+1}}s_{j}/s_i;q\big)_{\lambda^{(i)}_{k-i}-\lambda^{(i)}_{k-i+1}} \\ \hphantom{\prod_{i,j=1}^N \mathsf{N}^{(j-i|N)}_{\lambda^{(i)},\lambda^{(j)}}(c(\kappa^{\delta/N} s)_j/(\kappa^{\delta/N} s)_i|q,\kappa^{1/N})= }
{}\times
\prod_{i=1}^N \prod_{i<j\leq k<\infty} \!\!\! \!\! \big(cq^{\lambda^{(j)}_{k-j+1}-\lambda^{(i)}_{k-i}}s_i/s_{j};q\big)_{\lambda^{(i)}_{k-i}-\lambda^{(i)}_{k-i+1}}
\end{gather*}
swapping variable names $i\leftrightarrow j$ in the second group of~products.
We~insert this into~\eqref{tCN} to~obtain
\begin{gather*}
 \tilde C_N(\vlam;s|q,t,\kappa) = (t/q)^{|\vlam|}\prod_{i=1}^N \prod_{i\leq j<k<\infty} \frac{\big(q^{-\lambda^{(j)}_{k-j}+\lambda^{(i)}_{k-i+1}}qs_{j}/ts_i;q\big)_{\lambda^{(i)}_{k-i}
 -\lambda^{(i)}_{k-i+1}}}{\big(q^{-\lambda^{(j)}_{k-j}+\lambda^{(i)}_{k-i+1}}s_{j}/s_i;q\big)_{\lambda^{(i)}_{k-i}
 -\lambda^{(i)}_{k-i+1}}}
 \\ \hphantom{ \tilde C_N(\vlam;s|q,t,\kappa) =}
{}\times
\prod_{i=1}^N \prod_{i<j\leq k<\infty} \frac{\big(q^{\lambda^{(j)}_{k-j+1}-\lambda^{(i)}_{k-i}}qs_i/ts_{j};q\big)_{\lambda^{(i)}_{k-i}
-\lambda^{(i)}_{k-i+1}}}{\big(q^{\lambda^{(j)}_{k-j+1}-\lambda^{(i)}_{k-i}}s_i/s_{j};q\big)_{\lambda^{(i)}_{k-i}
-\lambda^{(i)}_{k-i+1}}}.
\end{gather*}
To proceed, we use the well-known identity
\begin{gather*}
\frac{(q^{-m}/a;q)_m}{(q^{-m}/b;q)_m} = (b/a)^m\frac{(qa;q)_m}{(qb;q)_m} \qquad
(a,b\in\C,\ m\in\Z_{\geq 0}).
\end{gather*}
Applying this to the factors in the second group of~products for~$m=\lambda^{(i)}_{k-i}-\lambda^{(i)}_{k-i+1}$, $b=q^{\lambda^{(i)}_{k-i+1}-\lambda^{(j)}_{k-j+1}}s_j/s_i$, $a=tb/q$ yields
\begin{gather*}
\tilde C_N(\vlam;s|q,t,\kappa) = (t/q)^{|\vlam|}\prod_{i=1}^N \prod_{i\leq j<k<\infty} \frac{\big(q^{-\lambda^{(j)}_{k-j}+\lambda^{(i)}_{k-i+1}}qs_{j}/ts_i;q\big)_{\lambda^{(i)}_{k-i}
-\lambda^{(i)}_{k-i+1}}}{\big(q^{-\lambda^{(j)}_{k-j}+\lambda^{(i)}_{k-i+1}}s_{j}/s_i;q\big)_{\lambda^{(i)}_{k-i}-\lambda^{(i)}_{k-i+1}}} \\ \hphantom{\tilde C_N(\vlam;s|q,t,\kappa) =}
{}\times
\prod_{i=1}^N \prod_{i<j\leq k<\infty} (q/t)^{\lambda^{(i)}_{k-i}-\lambda^{(i)}_{k-i+1}} \frac{\big(q^{\lambda^{(i)}_{k-i+1}-\lambda^{(j)}_{k-j+1}}ts_j/s_i\big)_{\lambda^{(i)}_{k-i}
-\lambda^{(i)}_{k-i+1}}}{\big(q^{\lambda^{(i)}_{k-i+1}-\lambda^{(j)}_{k-j+1}}qs_j/s_i\big)_{\lambda^{(i)}_{k-i}-\lambda^{(i)}_{k-i+1}}}.
\end{gather*}
To complete the proof that $ \tilde C_N(\vlam;s|q,t,\kappa)$ in~\eqref{tCN} is identical with $C_{N,\infty}(\vlam|\bs|q,t)$ in~\eqref{CNinfty}, we~swap the order of~the two groups of~products and compute the overall power of~$(q/t)$:
\begin{gather*}
\prod_{i=1}^N \prod_{i<j\leq k<\infty} (q/t)^{\lambda^{(i)}_{k-i}-\lambda^{(i)}_{k-i+1}} = \prod_{i=1}^N\prod_{j=i+1}^\infty\prod_{k=j}^\infty (q/t)^{\lambda^{(i)}_{k-i}-\lambda^{(i)}_{k-i+1}}
\\ \hphantom{\prod_{i=1}^N \prod_{i<j\leq k<\infty} (q/t)^{\lambda^{(i)}_{k-i}-\lambda^{(i)}_{k-i+1}}}
{}= \prod_{i=1}^N \prod_{j=i+1}^\infty (q/t)^{\lambda^{(i)}_{j-i}} = \prod_{i=1}^N\prod_{k\geq 1} (q/t)^{\lambda^{(i)}_{k}} =(q/t)^{|\vlam|},
\end{gather*}
cancelling the factor $(t/q)^{|\vlam|}$.
This proves the identity in~\eqref{Identity1} with $f_{N,\infty}(\bx|\bs|q,t)$ in~\eqref{fNinfty1}--\eqref{CNinfty} and $x_{i+N}=px_i$, $s_{i+N}=\kappa s_i$ ($i\geq 1$).
This, together with Lemma~\ref{lem1}, implies the result.

\section{Estimates}\label{app:estim}
{\sloppy
We~give a~complementary proof of~the second estimate in~\eqref{estim}, to compute the upper bounds~$C_{1,2}$ in Theorem~\ref{thm:converegence}.

}

\subsection[Complementary proof of~the second estimate in (3.13)]{Complementary proof of~the second estimate in~(\ref{estim})}
\label{app:proofestim}
We~prove that, under the assumptions in Theorem~\ref{thm:converegence}, the following estimates hold true for the fractions appearing in the formula~\eqref{CNinfty} for~$C_{N,\infty}(\vlam|\bs|q,t)$,
\begin{gather}
\label{frac1}
\left|\frac{\big(q^{\lambda^{(i)}_{k-i+1} - \lambda^{(j)}_{k-j+1}}ts_{j}/s_i;q\big)_{\lambda^{(i)}_{k-i}-\lambda^{(i)}_{k-i+1}}}{\big(q^{\lambda^{(i)}_{k-i+1} - \lambda^{(j)}_{k-j+1}}qs_{j}/s_i;q\big)_{\lambda^{(i)}_{k-i}-\lambda^{(i)}_{k-i+1}}}\right|\leq C_1^{\lambda^{(i)}_{k-i}-\lambda^{(i)}_{k-i+1}} \ \ (1\leq i\leq N,\ i<j\leq k<\infty) ,\!\!\!
\\
\label{frac2}
\left| \frac{\big(q^{-\lambda^{(j)}_{k-j}+\lambda^{(i)}_{k-i+1}}qs_{j}/ts_i ;q\big)_{\lambda^{(i)}_{k-i}-\lambda^{(i)}_{k-i+1}}}{\big(q^{-\lambda^{(j)}_{k-j}+\lambda^{(i)}_{k-i+1}}s_{j}/s_i ;q\big)_{\lambda^{(i)}_{k-i}-\lambda^{(i)}_{k-i+1}}} \right| \leq C_2^{\lambda^{(i)}_{k-i}-\lambda^{(i)}_{k-i+1}} \ \
(1\leq i\leq N,\ i\leq j< k<\infty)\!\!
\end{gather}
for all $N$-partitions $\vlam=\big(\lambda^{(1)},\dots,\lambda^{(N)}\big)\in\mathsf{P}^N$, with $C_1$ and $C_2$ in~\eqref{C1C2}.
This and~\eqref{CNinfty} imply the estimate in~\eqref{Cbound} which, by the computation in~\eqref{Cbound}, is equivalent to the second estimate in~\eqref{estim}.

We~observe all estimates in~\eqref{frac1}--\eqref{frac2} are of~the form
\begin{gather*}
\left|\frac{\big(q^{l}as_i/s_j;q\big)_\theta}{\big(q^l s_i/s_j;q\big)_\theta}\right|\leq C^\theta,
\end{gather*}
where
\begin{gather}
\label{first}
l= \lambda^{(i)}_{k-i+1} - \lambda^{(j)}_{k-j+1}+1,\qquad
a=t/q,\qquad
\theta= \lambda^{(i)}_{k-i}-\lambda^{(i)}_{k-i+1},\qquad
C=C_1
\end{gather}
in~\eqref{frac1} and
\begin{gather}
\label{second}
l=-\lambda^{(j)}_{k-j}+\lambda^{(i)}_{k-i+1},\qquad
a=q/t,\qquad
\theta= \lambda^{(i)}_{k-i}-\lambda^{(i)}_{k-i+1},\qquad
C=C_2
\end{gather}
in~\eqref{frac2}. We~prove~\eqref{frac1}--\eqref{frac2} using three different kinds of~estimates:

\begin{Lemma}\label{lem:est}
Let $\theta\in\Z_{\geq 0}$, $a\in\C$, $q, \kappa\in\R$ with either $|q|<1$ and $|\kappa|>1$ or $|q|>1$ and $|\kappa|<1$. Then the following estimates hold true,
\begin{itemize}\itemsep=0pt
\item[$(a)$] for all $l\in\Z$ and $u\in\C\setminus\{\R\}$:
\begin{gather}\label{esta}
\left| \frac{\big(q^{l}au;q\big)_\theta}{\big(q^l u;q\big)_\theta}\right|\leq \left( 1 + \frac{|1-a|}{|\sin\arg(u)|}\right)^\theta\!,
\end{gather}
\item[$(b)$] for all $m\in\Z_{\geq 0}$, $\ell\in\Z_{\geq 1}$:
\begin{gather}\label{estb}
\left|\frac{\big( q^{-\theta-m+1} a \kappa^\ell \big)_\theta}{\big(q^{-\theta-m+1} \kappa^\ell \big)_\theta} \right| \leq \left(1+|1-a|\frac{|\kappa|}{|1-|\kappa||} \right)^\theta\!,
\end{gather}
\item[$(c)$] for all $m\in\Z_{\geq 0}$, $\ell\in\Z_{\geq 0}$:
\begin{gather}\label{estc}
\left|\frac{\big( q^{-\theta-m} a \kappa^\ell \big)_\theta}{\big(q^{-\theta-m} \kappa^\ell \big)_\theta} \right| \leq \left(1+|1-a|\frac{1}{|1-|q||} \right)^\theta\!.
\end{gather}
\end{itemize}
\end{Lemma}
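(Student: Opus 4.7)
The plan is to factor each q-Pochhammer ratio into $\theta$ elementary fractions, bound each of them uniformly, and take the $\theta$-th power. Writing $(w;q)_\theta = \prod_{r=0}^{\theta-1}(1-wq^r)$, every ratio in (a)--(c) takes the form $\prod_{r=0}^{\theta-1}|1-q^{l+r}au|/|1-q^{l+r}u|$ for appropriate $l$ and $u$ (with $u=\kappa^{\ell}$ in (b) and (c)). Using the elementary identity already exploited in the main text,
\begin{equation*}
\frac{1-q^{l+r}au}{1-q^{l+r}u} = 1 + (1-a)\frac{q^{l+r}u}{1-q^{l+r}u},
\end{equation*}
and the triangle inequality, each factor is at most $1 + |1-a|\cdot|z/(1-z)|$ with $z = q^{l+r}u$. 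The problem reduces to giving a single uniform upper bound on $|z/(1-z)|$ over the allowed range of $r$.

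For (a), the plan is to invoke the inequality $|z/(1-z)| \leq 1/|\sin\arg(z)|$ from the main text and observe that because $q\in\R$, multiplication of $u$ by $q^{l+r}$ changes $\arg(u)$ by $0$ or $\pi$ only, hence $|\sin\arg(z)| = |\sin\arg(u)|$ regardless of $l+r$. This gives the per-factor bound $1 + |1-a|/|\sin\arg(u)|$ and (a) follows by taking the $\theta$-th power.

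For (b) and (c), $z = q^{l+r}\kappa^{\ell}$ is real, so I would use instead the other main-text inequality $|z/(1-z)| \leq |z|/|1-|z||$. Here the key point is that the function $x\mapsto x/|1-x|$ is decreasing on $(1,\infty)$ and increasing on $(0,1)$, so one must verify that $|z|$ stays on one side of $1$ with a quantitative gap. In (b) the exponent $l+r = -\theta-m+1+r$ is nonpositive (since $r\leq\theta-1$, $m\geq 0$), while in (c) the shifted exponent $l+r = -\theta-m+r$ is at most $-1$. Combining this with $\ell\geq 1$ in (b) or $\ell\geq 0$ in (c), and splitting into the two hypothesis cases ($|q|<1,|\kappa|>1$ versus $|q|>1,|\kappa|<1$), a short case check shows $|z|\geq |\kappa|$ or $|z|\leq |\kappa|$ in (b), and $|z|\geq 1/|q|$ or $|z|\leq 1/|q|$ in (c). Monotonicity then yields the desired bounds $|\kappa|/|1-|\kappa||$ and $1/|1-|q||$ respectively.

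The main obstacle, modest but irreducible, is the four-way case bookkeeping in (b) and (c): one must simultaneously track the sign of the exponent of $q$, which of the two hypotheses on $(|q|,|\kappa|)$ is in force, and check that $|z|$ is pushed away from the pole of $x\mapsto x/|1-x|$ at $x=1$ so that the monotonicity works in the right direction. Once the per-factor bounds are in place, the $\theta$-th power yields exactly \eqref{esta}--\eqref{estc}.
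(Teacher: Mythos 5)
Your proposal is correct and follows essentially the same route as the paper's proof in Appendix~B: factor the ratio into the $\theta$ elementary fractions, write each as $1+(1-a)z/(1-z)$, and bound $|z/(1-z)|$ via \eqref{zest} in case~$(a)$ (using that $q\in\R$ leaves $|\sin\arg u|$ unchanged) and via $|z/(1-z)|\leq|z|/|1-|z||$ together with the sign of the $q$-exponent and the two hypothesis cases on $(|q|,|\kappa|)$ in cases~$(b)$ and~$(c)$. The only cosmetic difference is that for $|z|>1$ the paper bounds $|1/(1-1/z)|$ directly rather than invoking monotonicity of $x\mapsto x/(x-1)$, which amounts to the same estimate.
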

(The proof is given in Appendix~\ref{app:prooflem:est}.)

{\bf Case A:} For $j-i\notin N\Z_{\geq 0}$, we can use the estimate in~\eqref{esta}:
Since $s_{j+N}=\kappa s_j$ and~$\kappa$ is real, we have $|\sin\arg(s_j/s_i)|=|\sin\arg(s_{j+N}/s_i)|=|\sin\arg(s_i/s_j)|$ for all $j\geq i$;
since $|\sin\arg(s_j/s_i)|\geq \sigma$ for all $1\leq i<j\leq N$ by assumption, $|\sin\arg(s_j/s_i)|\geq \sigma$ for all $1\leq i\leq N$ and $j\geq i$ such that $j-i\neq N\Z_{\geq 0}$, and we get
\begin{gather*}
\left|\frac{(q^{l}as_i/s_j;q)_\theta}{(q^l s_i/s_j;q)_\theta}\right| \leq \left(1+\frac{|1-a|}{\sigma} \right)^\theta\qquad
(j-i\notin N\Z_{\geq 0})
\end{gather*}
for all cases in~\eqref{first}--\eqref{second}. This proves that the estimates in~\eqref{frac1}--\eqref{frac2} for all
\begin{gather}
\label{C1C2A}
C_1\geq 1+\frac{|1-t/q|}{\sigma},\qquad
C_2\geq 1+\frac{|1-q/t|}{\sigma}
\end{gather}
and for all cases $j-i\notin N\Z_{\geq 0}$.

We~consider the remaining cases for~\eqref{frac1} and~\eqref{frac2} below in Cases B and C, respectively.

{\bf Case B:} For $j-i\in N\Z_{\geq 1}$, we have $s_j/s_i = s_{i+\ell N}/s_i=\kappa^\ell$ for some $\ell\in\Z_{\geq 1}$, and we can use the estimate in~\eqref{estb}:
\begin{gather*}
\left|\frac{\big(q^{-\theta-m+1}as_i/s_j;q\big)_\theta}{\big(q^{-\theta-m+1} s_i/s_j;q\big)_\theta}\right|\leq \left(1+|1-a|\frac{|\kappa|}{||\kappa|-1|} \right)^\theta\qquad
(j-i\in N\Z_{\geq 1},\ m\in\Z_{\geq 0}).
\end{gather*}
We~check that all cases in~\eqref{frac1} for~$j-i\in N\Z_{\geq 1}$ are covered by this: all $l$ in~\eqref{first} for~$j=i+\ell N$ can be written as (recall that $\lambda^{(i+\ell N)}_k=\lambda^{(i)}_k$)
\begin{gather*}
l = -\big[\lambda^{(i)}_{k-i}- \lambda^{(i)}_{k-i+1}\big] - \big[\lambda^{(i)}_{k-i-\ell N+1}-\lambda^{(i)}_{k-i}\big]+1 = -\theta-m+1
\end{gather*}
with $m=\lambda^{(i)}_{k-i-\ell N+1}-\lambda^{(i)}_{k-i}\geq 0$ since $\ell\geq 1$ and $\lambda^{(i)}=\big(\lambda_1^{(i)},\lambda_2^{(i)},\dots\big)$ is a~partition.
This proves that~\eqref{frac1} holds true if
\begin{gather}
\label{C1C2B}
C_1\geq 1+|1-t/q|\frac{|\kappa|}{|1-|\kappa||}
\end{gather}
for all cases $j-i\in N\Z_{\geq 1}$.

{\bf Case C:} For $j-i\in N\Z_{\geq 0}$, we have $s_j/s_i = s_{i+\ell N}/s_i=\kappa^\ell$ for some $\ell\in\Z_{\geq 0}$, and we can use the estimate in~\eqref{estc}:
\begin{gather*}
\left|\frac{\big(q^{-\theta-m}as_i/s_j;q\big)_\theta}{\big(q^{-\theta-m} s_i/s_j;q\big)_\theta}\right|\leq \left(1+|1-a|\frac{1}{|1-|q||} \right)^\theta\qquad
(j-i\in N\Z_{\geq 0},\ m\in\Z_{\geq 0}).
\end{gather*}
We~check that all cases in~\eqref{frac2} for~$j-i\in N\Z_{\geq 0}$ are covered by this: all $l$ in can be written as
\begin{gather*}
l=-\big[\lambda^{(i)}_{k-i}-\lambda^{(i)}_{k-i+1}\big]-\big[\lambda^{(i)}_{k-i-\ell N}-\lambda^{(i)}_{k-i}\big] = -\theta-m
\end{gather*}
with $m=\lambda^{(i)}_{k-i-\ell N}-\lambda^{(i)}_{k-i}\geq 0$.
This proves that~\eqref{frac2} holds true if
\begin{gather}\label{C1C2C}
C_2\geq 1+|1-q/t|\frac{1}{|1-|q||}
\end{gather}
for all cases $j-i\in N\Z_{\geq 0}$.

We~proved that~\eqref{frac1}--\eqref{frac2} holds true for {\em all} cases provided the conditions in~\eqref{C1C2A},~\eqref{C1C2B} and~\eqref{C1C2C} all hold true; this is the case if we choose $C_1$ and $C_2$ as in~\eqref{C1C2}.

\subsection[Proof of~Lemma~B1]{Proof of~Lemma~\ref{lem:est}}\label{app:prooflem:est}
\subsubsection[Proof of~the estimate in (B.5)]{Proof of~the estimate in~(\ref{esta})}
We~have
\begin{gather*}
\text{LHS} = \left|\prod_{n=0}^{\theta-1}\frac{1-q^{l+n}au}{1-q^{l+n}u} \right| = \prod_{n=0}^{\theta-1}\left|1+(1-a)\frac{q^{l+n}u}{1-q^{l +n}u} \right|
\\ \hphantom{\text{LHS}}
{}\leq \prod_{n=0}^{\theta-1}\left(1+|1-a|\left|\frac{q^{l+n}u}{1-q^{l+n}u}\right| \right)\leq \prod_{n=0}^{\theta-1}\left(1+\frac{|1-a|}{|\sin\arg(u)|}\right) =
\text{RHS},
\end{gather*}
using the estimate in~\eqref{zest} and $\big|\sin\arg\big(q^{l+n}u\big)\big|=|\sin\arg(u)|$ since $q$ is real.

\subsubsection[Proof of~the estimate in (B.6)]{Proof of~the estimate in~(\ref{estb})}
\label{app:estb}
We~have
\begin{gather*}
\text{LHS} = \left| \prod_{n=0}^{\theta-1} \frac{1- q^{n-\theta-m+1}a\kappa^\ell }{1-q^{n-\theta-m+1} \kappa^\ell } \right| = \prod_{n=0}^{\theta-1}\left| 1+(1-a)\frac{q^{-n-m}\kappa^\ell}{1-q^{-n-m}\kappa^\ell}\right|
\\ \hphantom{\text{LHS}}
{}\leq \prod_{n=0}^{\theta-1}\left( 1+|1-a|\left|\frac{q^{-n-m}\kappa^\ell}{1-q^{-n-m}\kappa^\ell}\right|\right)
\leq \prod_{n=0}^{\theta-1}\left( 1+|1-a|\frac{|\kappa|}{|1-|\kappa||}\right) =\text{RHS},
\end{gather*}
using
\begin{gather*}
\left|\frac{q^{-l}\kappa^\ell}{1-q^{-l}\kappa^\ell}\right|\leq \frac{|\kappa|}{|1-|\kappa||}\qquad
(l\geq 0, \ell\geq 1);
\end{gather*}
the latter follows for the case $|\kappa|<1$ and $|q|>1$ from the following inequality: $x/(1-x)<y/(1-x)$ for~$0\leq x<y<1$, and for the case $|\kappa|>1$ and $|q|<1$:
\begin{gather*}
\left|\frac{q^{-l}\kappa^\ell}{1-q^{-l}\kappa^\ell}\right| = \left|\frac{1}{1-q^{l}\kappa^{-\ell}}\right| \leq \frac1{1-|1/\kappa|} = \frac{|\kappa|}{|1-|\kappa||}
\end{gather*}
since $1/(1-x)<1/(1-y)$ for~$0\leq x<y<1$.

\subsubsection[Proof of~the estimate in (B.7)]{Proof of~the estimate in~(\ref{estc})}
We~have
\begin{gather*}
\text{LHS} = \left| \prod_{n=0}^{\theta-1} \frac{1- q^{n-\theta-m}a\kappa^\ell }{1-q^{n-\theta-m+1} \kappa^\ell } \right| = \prod_{n=1}^{\theta}\left| 1+(1-a)\frac{q^{-n-m}\kappa^\ell}{1-q^{-n-m}\kappa^\ell}\right|
\\ \hphantom{\text{LHS}}
{}\leq \prod_{n=1}^{\theta}\left( 1+|1-a|\left|\frac{q^{-n-m}\kappa^\ell}{1-q^{-n-m}\kappa^\ell}\right|\right)
\leq \prod_{n=1}^{\theta}\left( 1+|1-a|\frac{1}{|1-|q||}\right) =\text{RHS},
\end{gather*}
using
\begin{gather*}
\left|\frac{q^{-l}\kappa^\ell}{1-q^{-l}\kappa^\ell}\right|\leq \frac{|q^{-1}|}{||q^{-1}|-1|} = \frac{1}{|1-|q||} \qquad
(l\geq 1,\ \ell\geq 0),
\end{gather*}
as in the proof of~\eqref{estb}.

\section[Proof of Proposition~4.2]{Proof of Proposition~\ref{prop:cT}}\label{app:cT}
We~prove Proposition~\ref{prop:cT} using Lemma~\ref{lem:cTtrig} in the main text.

\subsection{Proof of~commutativity}
We~prove~\eqref{commutativity}. We~note that the action of~the Macdonald--Ruijsenaars operators in~\eqref{MR} on functions $x^\lambda f(x)$ can be written as \cite{NS}
\begin{gather}\label{cD}
D_N^\pm(x|q,t)x^\lambda f(x) = x^\lambda E^\pm_N(x|s|q,t)f(x) ,\qquad
s_i=t^{N-i}q^{\lambda_i}
\end{gather}
with the modified Macdonald--Ruijsenaars operators
\begin{gather}
\label{Edef}
E^\pm_N(x|s|q,t)= \sum_{i=1}^N A_{N,i}\big(x|t^{\pm }\big)s_i^{\pm 1}T_{q,x_i}^{\pm 1} ,
\\
\label{Adef}
A_{N,i}(x|t^{\pm 1}) = \prod_{j=1}^{i-1}\frac{1-t^{\pm1}x_i/x_j}{1-x_i/x_j} \prod_{k=i+1}^N\frac{1-t^{\mp 1}x_k/x_i}{1-x_k/x_i}
\end{gather}
(this can be proved by simple computations which we skip).

We~also need properties of~the function $\chi_N(x|y|q,t)$ in~\eqref{psiN} which we summarize as follows.
\begin{Lemma}
\label{lem:psiN}
The function $\chi_N(x|y|q,t)$ satisfies the following duality relation,
\begin{gather}
\label{duality}
\chi_N(x|y|q,t)=\chi_N(y|x|q,t/q).
\end{gather}
Moreover,
\begin{gather}
\begin{split}
\label{cDpsi}
&E^\pm_N(x|y|q,t) \chi_N(x|y|q,t)=e_1\big(y^{\pm 1}\big) \chi_N(x|y|q,t),
\\[.5ex]
&E^\pm_N(y|x|q,q/t) \chi_N(x|y|q,t)=e_1\big(x^{\pm 1}\big) \chi_N(x|y|q,t)
\end{split}
\end{gather}
with
\begin{gather}
\label{e1}
e_1\big(x^{\pm 1}\big) = x_1^{\pm 1}+\cdots + x_N^{\pm 1}.
\end{gather}
\end{Lemma}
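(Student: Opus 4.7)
The lemma has three parts: the duality \eqref{duality} and the two eigenvalue relations in \eqref{cDpsi}. My plan is to prove the duality first and then bootstrap from the first eigenvalue equation to the second using that duality.

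\textbf{Step 1 (duality).} My plan is to reduce \eqref{duality} to the Poincar\'e duality of $\varphi_N$ stated in \eqref{symmetry}. Combining the definitions \eqref{phiN} and \eqref{psiN}, one writes
\[
\chi_N(x|y|q,t)=\varphi_N(x|y|q,t)\prod_{1\le i<j\le N}\frac{(qx_j/x_i;q)_\infty\,(qy_j/y_i;q)_\infty}{(qx_j/tx_i;q)_\infty\,(ty_j/y_i;q)_\infty},
\]
and expresses $\chi_N(y|x|q,t/q)$ analogously. After invoking Poincar\'e duality to equate the two $\varphi_N$ factors, what remains is a short verification that the ratio of the resulting q-Pochhammer prefactors equals $1$, whose key input is the trivial identity $(qx_j/tx_i;q)_\infty=((q/t)x_j/x_i;q)_\infty$ and its $y$-analogue.

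\textbf{Step 2 (first eigenvalue equation).} My plan is to use the known eigenvalue property of the asymptotically free solution $f_N$. The conjugation formula \eqref{cD}, combined with the statement following \eqref{fs} that $x^\lambda f_N(x|s|q,t)$ is an eigenfunction of $D_N^\pm(x|q,t)$ with eigenvalue $e_1(s^{\pm 1})$ when $s_i=t^{N-i}q^{\lambda_i}$, yields
\[
E_N^\pm(x|s|q,t)\,f_N(x|s|q,t)=e_1(s^{\pm 1})\,f_N(x|s|q,t)
\]
for generic $s$ (since $\lambda\in\C^N$ parametrizes such $s$ freely). Specializing $s=y$ and multiplying both sides by the prefactor $\prod_{i<j}(qy_j/y_i;q)_\infty/(ty_j/y_i;q)_\infty$, which depends only on $y$ and therefore commutes with $E_N^\pm(x|y|q,t)$ (a $q$-shift operator in the variables $x_i$ with coefficients rational in $x,y,t$), gives the first identity in \eqref{cDpsi}.

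\textbf{Step 3 (second eigenvalue equation).} My plan is to derive this from Step 2 together with the duality of Step 1. Applying the first identity with $(x,y,t)$ replaced by $(y,x,q/t)$ produces
\[
E_N^\pm(y|x|q,q/t)\,\chi_N(y|x|q,q/t)=e_1(x^{\pm 1})\,\chi_N(y|x|q,q/t),
\]
and the duality of Step 1 (with $t$ shifted appropriately) then allows one to replace $\chi_N(y|x|q,q/t)$ by $\chi_N(x|y|q,t)$ on both sides.

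The only nontrivial obstacle is the bookkeeping of q-Pochhammer prefactors in Step 1: one must match two pairs of infinite products coming from the $x$- and $y$-factors in the definition of $\chi_N$ after applying Poincar\'e duality of $\varphi_N$. Everything else — Steps 2 and 3 — is a straightforward specialization or substitution within the identities already recalled in the paper.
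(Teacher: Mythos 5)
Your proposal is correct and follows essentially the same route as the paper's own proof: rewrite $\chi_N$ as the manifestly $(x,y,t)\mapsto(y,x,q/t)$-invariant Pochhammer prefactor times $\varphi_N(x|y|q,t)$ and invoke the duality \eqref{symmetry}, obtain the first relation in \eqref{cDpsi} from the eigenvalue identity $E^\pm_N(x|s|q,t)f_N(x|s|q,t)=e_1\big(s^{\pm1}\big)f_N(x|s|q,t)$ of~\cite{NS} (which you re-derive from \eqref{fs} and \eqref{cD}, a harmless extra step) after factoring out the $y$-only prefactor, and deduce the second relation by the substitution $(x,y,t)\to(y,x,q/t)$ plus the duality. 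Note that what both your argument and the paper's proof actually establish is $\chi_N(x|y|q,t)=\chi_N(y|x|q,q/t)$, the form used in the main text before \eqref{cTtrig2}; the ``$t/q$'' in the printed statement \eqref{duality} appears to be a typo, and your Steps 1 and 3 are consistent with the correct $q/t$ version.
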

\begin{proof}The definitions in~\eqref{phiN} and~\eqref{psiN} imply
\begin{gather*}
\chi_N(x|y|q,t) = \prod_{1\leq i<j\leq N} \frac{(qx_j/x_i;q)_\infty}{(qx_j/tx_i;q)_\infty} \frac{(qy_j/y_i;q)_\infty}{(ty_j/y_i;q)_\infty} \, \varphi_N(x|y|q,t) .
\end{gather*}
The product on the right-hand side is manifestly invariant under the transformation $(x,y,t)\mapsto(y,x,q/t)$;
the function $\varphi_N(x|y|q,t)$ has this invariance by~\eqref{symmetry}. This proves~\eqref{duality}.

The first identity in~\eqref{cDpsi} is implied by $E^\pm_N(x|s|q,t)f_N(x|s|q,t)=e_1\big(s^{\pm 1}\big)f_N(x|s|q,t)$ proved in \cite{NS}; the second follows from the first and the duality in~\eqref{duality}.
\end{proof}

Equation~\eqref{cD} and Lemma~\ref{lem:cTtrig} imply that the result we want to prove:
\begin{gather*}
D_N^\pm(x|q,t)\cT_N(x|q,t)x^\lambda f(x)=\cT_N(x|q,t) D_N^\pm(x|q,t)x^\lambda f(x),
\end{gather*}
is equivalent to
\begin{gather*}
\Bigg[ E^\pm_N(x|s|q,t) \prod_{i=1}^N\vartheta_3(s_ix_i/y_i|q)\chi_N(x|y|q,t) \prod_{1\leq i<j\leq N}(1-y_j/y_i)f(y)\Bigg]_{1,y}
\\ \qquad
{}= \Bigg[\prod_{i=1}^N\vartheta_3(s_ix_i/y_i|q)\chi_N(x|y|q,t) \prod_{1\leq i<j\leq N}(1-y_j/y_i)\big( E^\pm_N(y|s|q,t) f(y)\big) \Bigg]_{1,y}\! .
\end{gather*}
The latter is obviously implied by the following two identities: first,
\begin{gather}
E^\pm_N(x|s|q,t) \prod_{i=1}^N\vartheta_3(s_ix_i/y_i|q)\chi_N(x|y|q,t) \nonumber
\\ \qquad
{}=E^\mp_N\big(y\big|s^{-1}\big|q,q/t\big) \prod_{i=1}^N\vartheta_3(s_ix_i/y_i|q)\chi_N(x|y|q,t) ,
\label{toprove1}
\end{gather}
and second,
\begin{gather}
\Bigg[ \Bigg( E^\mp_N\big(y\big|s^{-1}\big|q,q/t\big) \prod_{i=1}^N\vartheta_3(s_ix_i/y_i|q)\chi_N(x|y|q,t)\Bigg) \prod_{1\leq i<j\leq N}(1-y_j/y_i)f(y)\Bigg]_{1,y} \nonumber
\\ \qquad
{}= \Bigg[\prod_{i=1}^N\vartheta_3(s_ix_i/y_i|q)\chi_N(x|y|q,t) \prod_{1\leq i<j\leq N}(1-y_j/y_i)\big( E^\pm_N(y|s|q,t) f(y)\big) \Bigg]_{1,y}\! .
\label{toprove2}
\end{gather}

We~first prove~\eqref{toprove1} in three steps, using the shorthand notation in~\eqref{e1}. We~start with
\begin{gather}
\label{Id1}
E^\pm_N(x|s|q,t) \prod_{i=1}^N\vartheta_3(s_ix_i/y_i|q) = \frac{q^{-1/2}}{1-q^{-1}}\prod_{i=1}^N\vartheta_3(s_ix_i/y_i|q) \big[e_1\big(x^{\mp 1}\big),E^\pm_N(x|y|q,t)\big]
\end{gather}
proved by the following computation (we insert definitions and change the summation variable $n_i\pm 1\to n_i$ in the third equality),
\begin{gather*}
\text{LHS} = \sum_{i=1}^N A_{N,i}\big(x|t^{\pm 1}\big) s^{\pm 1}_i T_{q,x_i}^{\pm 1} \sum_{n\in\Z^N} \prod_{j=1}^N\left(\frac{s_j x_j}{y_j} \right)^{n_j}q^{\frac12 n_j^2}
\\ \hphantom{\text{LHS}}
{}= \sum_{n\in\Z^N} \sum_{i=1}^N A_{N,i}\big(x|t^{\pm 1}\big) \left(\frac{y_i}{x_i}\right)^{\pm 1} \left(\frac{s_i x_i}{y_i} \right)^{n_i\pm 1}q^{\frac12n_i^2\pm n_i} \prod_{\setatop{j=1}{j\neq i}}^N\left(\frac{s_j x_j}{y_j} \right)^{n_j}q^{\frac12 n_j^2} T_{q,x_i}^{\pm 1}
\\ \hphantom{\text{LHS}}
{}= \prod_{i=1}^N\vartheta_3(s_ix_i/y_i|q) \sum_{i=1}^N A_{N,i}\big(x|t^{\pm 1}\big) \left( \frac{y_i}{x_i}\right)^{\pm 1} q^{-1/2} T_{q,x_i}^{\pm 1} =\text{RHS}
\end{gather*}
since
\begin{gather*}
\big[e_1\big(x^{\mp 1}\big),E^\pm_N(x|y|q,t)\big]
\\
\qquad
{}= \sum_{i=1}^N A_{N,i}\big(x|t^{\pm 1}\big) y_i^{\pm 1}\big[x_i^{\mp 1},T_{q,x_i}^{\pm 1}\big]
 = \big(1-q^{-1}\big)\sum_{i=1}^N A_{N,i}\big(x|t^{\pm 1}\big) \left(\frac{y_i}{x_i}\right)^{\pm 1}T_{q,x_i}^{\pm 1} .
\end{gather*}
Next,
\begin{gather}
\label{Id2}
\big[e_1\big(x^{\mp 1}\big),E^\pm_N(x|y|q,t)\big] \chi_N(x|y|q,t)= \big[e_1\big(y^{\pm 1}\big),E^\mp_N(y|x|q,q/t)\big] \chi_N(x|y|q,t) ,
\end{gather}
which is proved by
\begin{gather*}
\text{LHS} = \left(e_1\big(x^{\mp 1}\big)e_1\big(y^{\pm 1}\big)- E^\pm_N(x|y|q,t)E^\mp_N(y|x|q,q/t) \right)\chi_N(x|y|q,t)
\\ \hphantom{\text{LHS}}
{}= \left(e_1\big(y^{\pm 1}\big) e_1\big(x^{\mp 1}\big)- E^\mp_N(y|x|q,q/t)E^\pm_N(x|y|q,t) \right)\chi_N(x|y|q,t) = \text{RHS}
\end{gather*}
using~\eqref{cDpsi} and
\begin{gather*}
\big[ E^\pm_N(x|y|q,t),E^\mp_N(y|x|q,q/t)\big]=0;
\end{gather*}
the latter is verified by a~simple computation using the definition in~\eqref{Edef}--\eqref{Adef}. Third,
\begin{gather}
E^\mp_N\big(y|s^{-1}|q,q/t\big) \prod_{i=1}^N\!\vartheta_3(s_ix_i/y_i|q) \nonumber\\
\qquad{} = \frac{q^{-1/2}}{1-q^{-1}} \prod_{i=1}^N\!\vartheta_3(s_ix_i/y_i|q) \big[e_1\big(y^{\pm 1}\big),E^\mp_N(y|x|q,q/t)\big],\label{Id3}
\end{gather}
which is proved similarly as~\eqref{Id1}:
\begin{gather*}
\text{LHS} = \sum_{i=1}^N A_{N,i}\big(y|(q/t)^{\mp 1}\big) s^{\pm 1}_i T_{q,y_i}^{\mp 1} \sum_{n\in\Z^N} \prod_{j=1}^N\left(\frac{s_j x_j}{y_j} \right)^{n_j}q^{\frac12 n_j^2}
\\ \hphantom{\text{LHS}}
{}= \sum_{n\in\Z^N} \sum_{i=1}^N A_{N,i}\big(y|(q/t)^{\mp 1}\big) \left(\frac{y_i}{x_i}\right)^{\pm 1} \left(\frac{s_i x_i}{y_i} \right)^{n_i\pm 1}q^{\frac12n_i^2\pm n_i} \prod_{\setatop{j=1}{j\neq i}}^N\left(\frac{s_j x_j}{y_j} \right)^{n_j}q^{\frac12 n_j^2} T_{q,x_i}^{\mp 1}
\\ \hphantom{\text{LHS}}
{}= \prod_{i=1}^N\vartheta_3(s_ix_i/y_i|q) \sum_{i=1}^N A_{N,i}\big(y|(q/t)^{\mp 1}\big) \left( \frac{y_i}{x_i}\right)^{\pm 1} q^{-1/2} T_{q,y_i}^{\mp 1} =\text{RHS}
\end{gather*}
since
\begin{gather*}
\big [e_1\big(y^{\pm 1}\big),E^\mp_N(y|x|q,q/t)\big]
 \\ \qquad{}= \sum_{i=1}^N A_{N,i}\big(y|(q/t)^{\mp 1}\big) x_i^{\mp 1}\big[y_i^{\pm 1},T_{q,y_i}^{\mp 1}\big]
 = \big(1-q^{-1}\big)\sum_{i=1}^N A_{N,i}\big(y|(q/t)^{\mp 1}\big) \left(\frac{y_i}{x_i}\right)^{\pm 1}T_{q,y_i}^{\mp 1}.
\end{gather*}
We~are now ready to prove~\eqref{toprove1}: we insert~\eqref{Id1} into the LHS in~\eqref{toprove1}, use~\eqref{Id2} and~\eqref{Id3}, and obtain the RHS in~\eqref{toprove1}.

To conclude our proof of~\eqref{commutativity}, we prove~\eqref{toprove2} by the following computation, using the definitions in~\eqref{Edef}--\eqref{Adef} and the basic property $\left[ \left(T_{q,y_i}^{\mp 1}g_1(y)\right) g_2(y)\right]_{1,y} = \left[ g_1(y)\left(T_{q,y_i}^{\pm 1}g_2(y)\right)\right]_{1,y}$ of~the constant term:

\begin{gather*}
\text{LHS} \,{=}\, \Biggl[ \sum_{i=1}^N\! A_{N,i}\big(y|(q/t)^{\mp 1}\big)s_i^{\pm 1} \!
\Bigg(\! T_{q,y_i}^{\mp1} \prod_{i=1}^N\!\vartheta_3(s_ix_i/y_i|q)\chi_N(y|x|q,t)\!\Bigg)\!\!\!
 \prod_{1\leq i<j\leq N}\!\!\!\!\!(1-y_j/y_i)f(y)\!\Biggr]_{1,y}
 \\ \hphantom{\text{LHS}}
{}\,{=}\, \Biggl[\prod_{i=1}^N\vartheta_3(s_ix_i/y_i|q)\chi_N(y|x|q,t)
\!\sum_{i=1}^N\! \Bigg(\! T_{q,y_i}^{\pm 1} A_{N,i}\big(y|(q/t)^{\mp 1}\big)s_i^{\pm 1}\!\!\! \prod_{1\leq i<j\leq N}\!\!\!\!\!(1-y_j/y_i)f(y)\!\Bigg) \!\Biggr]_{1,y}
 \\ \hphantom{\text{LHS}}
{}=\text{RHS}
\end{gather*}
provided
\begin{gather*}
\frac{ T_{q,y_i}^{\pm 1}A_{N,i}\big(y|(q/t)^{\mp 1}\big) \prod_{1\leq i<j\leq N}(1-y_j/y_i)}{\prod_{1\leq i<j\leq N}(1-y_j/y_i)} = A_{N,i}\big(y|t^\pm\big) ;
\end{gather*}
the latter holds true for the coefficients $A_{N,i}(x|t)$ in~\eqref{Adef}, as is easily verified:
\begin{gather*}
\text{LHS} =
\prod_{j=1}^{i-1}\frac{1-(q/t)^{\mp 1}q^{\pm 1}y_i/y_j}{1-q^{\pm 1}y_i/y_j} \prod_{k=i+1}^N\frac{1-(q/t)^{\pm 1}y_k/q^{\pm 1}y_i}{1-y_k/q^{\pm 1}y_i}
 \prod_{j=1}^{i-1}\frac{1-q^{\pm 1}y_i/y_j}{1-y_i/y_j}
 \\ \hphantom{\text{LHS}=}
 {}\times\prod_{k=i+1}^N \frac{1-y_k/q^{\pm 1}y_i}{1-y_k/y_i}
=\text{RHS}.
\end{gather*}

\subsection{Eigenfunction property}
The eigenfunctions $x^\lambda f_N(x|s|q,t)$ of~$D^\pm_N(x|s|q,t)$ are unique, and~\eqref{commutativity} therefore implies that $x^\lambda f_N(x|s|q,t)$ also are eigenfunction of~$\cT_N(x|q,t)$.

We~are left to determine the eigenvalues. For that, we introduce some notation: the space of~formal power series $\C[[x_2/x_1,\dots,x_N/x_{N-1}]]$ is spanned by monomials
\begin{gather*}
(x_2/x_1)^{\alpha_1} \cdots (x_N/x_{N-1})^{\alpha_{N_1}}
\end{gather*}
with $\alpha=(\alpha_1,\dots,\alpha_{N-1})\in\Z_{\geq 0}^{N-1}$.
Any such monomial can be written as $x^\mu=x_1^{\mu_1}\cdots x_N^{\mu_N}$ with $\mu=\mu(\alpha)$ given by
\begin{gather*}
\mu_1(\alpha)=-\alpha_1,\qquad \mu_i(\alpha)=\alpha_{i-1}-\alpha_i\qquad (i=2,\dots,N-1),\qquad \mu_N(\alpha)=\alpha_{N-1}.
\end{gather*}
One can verify that the action of~the operator in~\eqref{cTtrig} is triangular on this basis in the following sense,
\begin{gather*}
\cT_N(x|q,t) x^\lambda x^{\mu(\alpha)} = \varepsilon(\lambda+\mu(\alpha))x^\lambda \Bigg( x^{\mu(\alpha)} + \sum_{\beta>\alpha} v_{\alpha\beta} x^{\mu(\beta)} \Bigg)
\end{gather*}
for some coefficients $v_{\alpha\beta}$, where $\beta\geq \alpha$ means that $\beta_i\geq \alpha_i$ for all $i=1,\dots,N-1$ (this follows from
\begin{gather*}
q^{\frac12\Delta} x^\lambda x^{\mu} = \varepsilon(\lambda+\mu)x^{\lambda +\mu} ,\qquad \varepsilon(\lambda+\mu)=q^{\frac12\sum_{i=1}^N(\lambda_i+\mu_i+(N-i)\beta)^2}
\end{gather*}
used already in the main text, and the fact that all functions of~$x$ appearing in the definition of~$\cT_N(x|q,t)$ in~\eqref{cTtrig} can be expanded as power series in $\C[[x_2/x_1,\dots,x_N/x_{N-1}]]$).
Since
\begin{gather*}
x^\lambda f_N(x|s|q,t) = x^\lambda\Bigg( 1 + \sum_{\alpha>0} b_\alpha x^{\mu(\alpha)}\Bigg)
\end{gather*}
for some coefficients $b_\alpha$, the eigenvalue is $\varepsilon(\lambda)=\varepsilon(s|q)$ in~\eqref{vepsN11}.

\section{Identities}\label{app:Id}
For clarify, and for the convenience of~the reader, we state and prove two identities used in the main text.

First, the identity
\begin{gather}\label{Euler}
\sum_{\lambda\in\mathsf{P}}\alpha^{|\lambda|} = \frac1{(\alpha;\alpha)_\infty} \qquad (|\alpha|<1),
\end{gather}
which goes back to Euler, is important in our proof of~Theorem~\ref{thm:converegence}. It is proved by the following elementary computation making absolute convergence of~the series manifest,
\begin{gather}
\text{LHS} = \lim_{M\to\infty} \sum_{\lambda_1\geq \lambda_2\geq \cdots\geq \lambda_M\geq 0} \alpha^{\lambda_1+\lambda_2+\dots+\lambda_M}\nonumber
\\ \phantom{\text{LHS}}
{}= \lim_{M\to\infty} \sum_{\lambda_1=\lambda_2}^\infty \sum_{\lambda_2=\lambda_3}^\infty \cdots \sum_{\lambda_{M-1}=\lambda_M}^\infty \sum_{\lambda_M=0}^\infty \alpha^{\lambda_1+\lambda_2+\dots+\lambda_M}\nonumber
\\ \phantom{\text{LHS}}
{}= \lim_{M\to\infty} \sum_{\lambda_2=\lambda_3}^\infty \cdots \sum_{\lambda_{M-1}=\lambda_M}^\infty \sum_{\lambda_M=0}^\infty \frac{1}{1-\alpha}
\alpha^{2\lambda_2+\lambda_3 +\dots+\lambda_M}\nonumber
\\ \phantom{\text{LHS}}
{}= \lim_{M\to\infty} \sum_{\lambda_3=\lambda_4}^\infty \cdots \sum_{\lambda_{M-1}=\lambda_M}^\infty \sum_{\lambda_M=0}^\infty \frac{1}{(1-\alpha)(1-\alpha^2)}
\alpha^{3\lambda_3+\lambda_4 +\dots+\lambda_M} = \cdots \nonumber
\\ \phantom{\text{LHS}}
{}= \lim_{M\to\infty} \frac1{(1-\alpha)} \frac1{(1-\alpha^2)}\cdots \frac1{(1-\alpha^M)} = \text{RHS},
\label{Euler1}
\end{gather}
summing repeatedly the geometric series.

Second, we state and prove an identity used in the proof of~Fact~\ref{fact:Id} and Remark~\ref{rem:Id}.

\begin{Lemma}\label{lem:Id2}For $p\in\C$, let $x_i\in\C$ be given for~$i=1,\dots,N$, and extend this definition to all $i\geq 1$ by $x_{i+N}=px_i$. Then
\begin{gather*}
\prod_{i=1}^N \prod_{j>i} \frac{(ax_j/x_i;q)_\infty}{(bx_j/x_i;q)_\infty} =
\prod_{1\leq i<j\leq N}\frac{(ax_j/x_i;q,p)_\infty}{(bx_j/x_i;q,p)_\infty}
\prod_{1\leq i\leq j\leq N}\frac{(pax_i/x_j;q,p)_\infty}{(pbx_i/x_j;q,p)_\infty}
\end{gather*}
for all $a, b, q\in\C$.
\end{Lemma}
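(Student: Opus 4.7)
The plan is to expand the double elliptic Pochhammers on the right-hand side into single-elliptic ones, and simultaneously reindex the infinite product on the left-hand side via the periodicity $x_{i+N}=px_i$, so that the two sides can be matched term by term.

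First I would abbreviate the ratio by setting $f(y)\equiv(ay;q)_\infty/(by;q)_\infty$, so both sides are rewritten entirely in terms of $f$. Using $(z;q,p)_\infty=\prod_{m\geq 0}(p^m z;q)_\infty$, the RHS becomes
\begin{gather*}
R = \prod_{1\leq i<j\leq N}\prod_{m\geq 0} f(p^m x_j/x_i)\,\cdot\,\prod_{1\leq i\leq j\leq N}\prod_{m\geq 1} f(p^m x_i/x_j),
\end{gather*}
where I have absorbed the prefactor $p$ in $(pax_i/x_j;q,p)_\infty$ into a shift $m\to m+1$.

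Next I would reindex the LHS. For fixed $i\in\{1,\dots,N\}$ every index $j>i$ can be written uniquely as $j=j'+nN$ with $j'\in\{1,\dots,N\}$ and $n\geq 0$, subject to the constraint that $j'>i$ when $n=0$. Since $x_{j'+nN}=p^n x_{j'}$, this gives
\begin{gather*}
L=\prod_{1\leq i<j'\leq N} f(x_{j'}/x_i)\,\cdot\,\prod_{n\geq 1}\prod_{i=1}^N\prod_{j'=1}^N f(p^n x_{j'}/x_i).
\end{gather*}
In the $n\geq 1$ block I would split the unrestricted double product over $\{1,\dots,N\}^2$ into the pieces $j'>i$ and $j'\leq i$; in the second piece, renaming $(i,j')\to(j,i)$ converts $p^n x_{j'}/x_i$ into $p^n x_i/x_j$ over pairs with $i\leq j$.

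Matching the three resulting groups against $R$ (the $m=0$ triangular factor, the $m\geq 1$ piece with $x_j/x_i$ for $i<j$, and the $m\geq 1$ piece with $x_i/x_j$ for $i\leq j$) shows $L=R$. The only potential obstacle is bookkeeping — in particular, making sure the boundary case $i=j$ (which appears only in the $n\geq 1$ contribution and only with the factor $p^n x_i/x_i=p^n$) is correctly absorbed into the second RHS product, which allows $i=j$; this is precisely why the RHS is set up with $1\leq i\leq j\leq N$ in that factor and $1\leq i<j\leq N$ in the other.
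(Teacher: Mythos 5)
Your proposal is correct and follows essentially the same route as the paper's proof: decompose the indices $j>i$ into residue classes $j=j'+mN$, use the quasi-periodicity $x_{j+mN}=p^m x_j$, and identify the resulting products with the double Pochhammer symbols $(z;q,p)_\infty=\prod_{m\geq 0}(p^mz;q)_\infty$. The only cosmetic difference is that you expand the right-hand side and reindex the left-hand side before matching, while the paper rewrites the left-hand side directly into the right-hand side; the bookkeeping, including the $i=j$, $m\geq 1$ boundary case, is handled correctly in both.
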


\begin{proof}We~note that $(z;q,p)= \prod_{m=0}^\infty (p^m x;q)_\infty$, and thus
\begin{gather*}
\text{LHS} =
 \prod_{1\leq i<j\leq N} \prod_{m=0}^\infty \frac{(ax_{j+m N}/x_i;q)_\infty}{(bx_{j+m N}/x_i;q)_\infty} \prod_{1\leq j\leq i\leq N} \prod_{m=1}^\infty \frac{(ax_{j+m N}/x_i;q)_\infty}{(bx_{j+m N}/x_i;q)_\infty}
 \\ \phantom{\text{LHS}}
{}= \prod_{1\leq i<j\leq N} \prod_{m=0}^\infty \frac{(p^m ax_{j}/x_i;q)_\infty}{(p^m b x_{j}/x_i;q)_\infty} \prod_{1\leq j\leq i\leq N}\prod_{m=1}^\infty \frac{(p^m a x_{j}/x_i;q)_\infty}{(p^m bx_{j}/x_i;q)_\infty}
= \text{RHS},
\end{gather*}
inserting $x_{j+m N}=p^mx_j$.
\end{proof}

\subsection*{Acknowledgements}

We~would like to thank F.~Atai, A.~Negut, and V.~Pasquier for useful discussions.
We~thank J.~Lamers for a~helpful comment on the manuscript.
We~are grateful to careful referees for remarks helping us to improve our paper.
This work is supported by VR Grant No~2016-05167 (E.L.) and by JSPS Kakenhi Grants (B) 15H03626 (M.N.), (C) 19K03512 (J.S.).
MN gratefully acknowledges financial support by the Knut and Alice Wallenberg foundation (KAW 2019.0525).
We~are grateful to the Stiftelse Olle Engkvist Byggm\"astare, Contract 184-0573, for~financial support.

\pdfbookmark[1]{References}{ref}
\LastPageEnding

\end{document}